\newtheorem{Theorem}{Theorem}
\newtheorem{Lemma}{Lemma}
\newtheorem{Proposition}{Proposition}
\newtheorem{corollary}{Corollary}
\newtheorem{definition}{Definition}
\newtheorem{remark}{Remark}
\begin{document}
\title{Insufficiency of Linear-Feedback Schemes In Gaussian Broadcast Channels with Common Message}

\author{Youlong Wu, Paolo Minero, Mich\`ele Wigger%
\thanks{This paper was in part presented at the \emph{IEEE International Workshop on Signal Processing Advances for Wireless Communications}, in Darmstadt, Germany, June 2013. }
\thanks{Y. Wu and M. Wigger are with the Department of Communications and Electronics, Telecom Paristech, Paris, France. (e-mail: youlong.wu@telecom-paristech; michele.wigger@telecom-paristech.fr). Paolo Minero is with the Department of Electrical Engineering at University of Notre Dame.}%
\thanks{The work of Y. Wu and M. Wigger was partially supported by the city of Paris under program ``Emergences``.}
}

\date{\today}
\maketitle

\begin{abstract}
We consider the $K\geq 2$-user  memoryless Gaussian broadcast channel (BC) with feedback and common message only. We show that linear-feedback schemes with a message point, in the spirit of Schalkwijk\&Kailath's scheme for point-to-point channels or Ozarow\&Leung's scheme for BCs with private messages, are strictly suboptimal for this setup. Even with perfect feedback, the largest rate achieved by these schemes is strictly smaller than capacity $C$ (which is the same with and without feedback). In the extreme case where the number of receivers $K\to \infty$, the largest rate achieved by linear-feedback schemes with a message point tends to 0.

To contrast this negative result, we describe a  scheme for \emph{rate-limited} feedback that uses the feedback  in an intermittent way, i.e., the receivers send feedback signals only in few channel uses.  This scheme achieves all rates $R$ up to capacity $C$ with an $L$-th order exponential decay of the probability of error if the feedback rate $R_{\textnormal{fb}}$ is at least $(L-1)R$  for some positive integer $L$.

\end{abstract}


\section{Introduction}
We consider the $K\geq 2$-user Gaussian broadcast channel (BC) where the transmitter sends a single common message to all receivers. For this setup, even perfect feedback cannot increase capacity. Feedback can however  reduce the minimum probability of error for a given blocklength.

For Gaussian point-to-point channels \cite{scha,kim} or for memoryless Gaussian networks such as the multiple-access channel (MAC) \cite{ozarMac} and the BC with private messages \cite{ozar}, perfect feedback allows to have a double-exponential decay of the probability of error in the blocklength. These super-exponential decays of the probability of error are achieved by Schalkwijk\&Kailath type schemes that first map the  message(s) into real message point(s) and then send as their channel inputs linear combinations of the message point(s) and the past feedback signals. We call such schemes \emph{linear-feedback schemes with message points} or  \emph{linear-feedback schemes} for short.  Such schemes are known to achieve the capacity of Gaussian point-to-point channels (memoryless or with memory) \cite{scha,kim} and the sum-capacity of the two-user memoryless Gaussian MAC \cite{ozarMac}. For $K\geq 3$-user Gaussian MACs they are optimal among a large class of schemes \cite{Kramer,ardesta}, and for Gaussian BCs with private messages, they achieve the largest sum-rates known to date \cite{minero,wigger,wigger2011}.

In this paper we show that linear-feedback schemes with a message point are strictly suboptimal for the $K$-user memoryless Gaussian BC with common message and fail to achieve capacity. 
As a consequence, for this setup, linear-feedback schemes also fail to achieve double-exponential  decay of the probability of error for rates close to capacity. To our knowledge, this is the first example of a memoryless Gaussian network with perfect feedback, where linear-feedback schemes with message points are shown to be strictly suboptimal. In all previously studied networks with perfect feedback, they attained the optimal performance or the best so far performance. (In case of noisy feedback, they are known to perform badly even in the memoryless Gaussian point-to-point channel \cite{kimlap}.)

In the asymptotic scenario of infinitely many receivers $K\to \infty$, the performance of linear-feedback schemes with a message point even collapses completely: the largest rate that is achievable with these schemes tends to 0 as $K \to \infty$. This latter result holds under some mild assumptions regarding the  variances of the noises experienced at the receivers, which are for example met when all the noise variances are equal. Notice that, in contrast, the capacity of the $K$-user Gaussian BC with common message does not tend to 0 as $K\to \infty$ when e.g., all the noise variances are equal. In this case, the capacity does not depend on $K$, because it is  simply given by the point-to-point capacity to the receiver with the largest noise variance.

That the performance of linear-feedback schemes with a common message point degenerates with increasing number of users $K$  is intuitively explained as follows. At each time instant, the transmitter sends a linear combination of the message point and 
past noise symbols. Resending the noise symbols previously experienced at some Receiver~$k$ can be beneficial for this Receiver~$k$ because it  allows it to 
mitigate the noise corrupting previous outputs. However, resending these noise symbols is of no benefit for all other Receivers~$k'\neq k$ and only harms them. 
Therefore, the more receivers there are, the more noise symbols the transmitter sends in each channel use that are useless for a given Receiver~$k$.

For the memoryless Gaussian point-to-point channel \cite{scha} and MAC \cite{ozar}, the (sum-)capacity achieving linear-feedback schemes with message points transmit in each channel use a scaled version of the linear minimum mean square estimation (LMMSE) errors  of the message points given the previous channel outputs. The same strategy is however strictly suboptimal---even among the class of linear-feedback schemes with message points---when sending private messages over a Gaussian BC~\cite{minero}. It is unknown whether  LMMSE estimates are optimal among linear-feedback schemes when sending a common message over the  Gaussian BC. 


In our proof that any linear-feedback scheme with a message point cannot achieve the capacity of the Gaussian BC with common message, the following proposition is key: For any sequence of linear-feedback schemes with a \emph{common} message point that achieves rate $R>0$, one can construct a sequence of linear-feedback schemes that achieves the rate tuple $R_1= \ldots= R_K = R$ when sending  $K$ \emph{private} message points with a linear-feedback scheme.
This proposition shows that the class of linear-feedback schemes with message points cannot take advantage of the fact that all the $K\geq 2$ receivers are interested in the same message.

To contrast the bad performance of linear-feedback schemes, we present a coding scheme that uses the feedback in a intermittent way (that  only in few time slots the receivers send  feedback signals) \cite{reza} and that achieves double-exponential decay of the probability of error for all rates up to capacity. In our scheme it suffices to have rate-limited feedback with feedback rate $R_{\textnormal{fb}}$ no smaller than the forward rate $R$. If the feedback rate $R_{\textnormal{fb}}< R$ then, even for the setup with only one receiver, the probability of error can decay only exponentially in the blocklength \cite{reza}. This implies immediately that also for the $K\geq 2$ receivers BC with common message no double-exponential decay in the probability of error is achievable  when $R_{\textnormal{fb}}< R$. When the feedback rate $R_{\textnormal{fb}}>(L-1)R$, for some positive integer $L$, then our intermittent-feedback scheme can achieve an $L$-th order exponential decay in the probability of error. That means, it achieves a probability of error of the form $P_\textnormal{e}^{(n)} = \exp(-\exp(\exp(\ldots\exp(\Omega(n)))))$, where there are $L$ exponential terms and where $\Omega(n)$ denotes a function that satisfies $\varliminf_{n\to\infty}\frac{\Omega(n)}{n} >0$.

In our intermittent-feedback scheme communication takes place in $L$ phases. In the first phase, the transmitter uses a Gaussian code of power $P$ to send the common message to the $K$ Receivers. The transmission in phase $l\in\{2,\ldots, L\}$ depends on the feedback signals. After each phases $l\in\{1,\ldots, L-1\}$ each Receiver~$k$ feeds back a temporary guess of the message. Now, if one receiver's temporary guesses  after phase $(l-1)$ is wrong, then in phase $l$ the transmitter resends the common message using a new code. If all receivers' temporary guesses after phase $(l-1)$ were correct, in phase $l$ the transmitter sends the all-zero sequence.  In this latter case, no power is consumed in phase $l$. The receivers' final guess is their temporary guess after phase $L$.

That the described scheme can achieve an $L$-th order decay of the probability of error, roughly follows from the following inductive argument. Assume that the probability of the event
``one of the receivers' guesses is wrong after phase $l$", for $l\in\{1,\ldots, L-1\}$, has an $l$-th order exponential decay in the blocklength. Then, when sending the common message in phase $l+1$, the transmitter can use power that is $l$-th order exponentially large in the blocklength 
without violating the \emph{expected} average blockpower constraint. With such a code, in turn, the probability that after phase $l+1$ one of the receivers has a wrong guess can have an $(l+1)$-th order exponential decay in the blocklength.

The rest of the paper is organized as follows. This section is concluded with some remarks on notation.
Section~\ref{sec:system} describes the Gaussian BC with common message and defines the class of linear-feedback schemes with a message point. Section~\ref{sec: private} introduces the Gaussian BC with private messages and defines the class of linear-feedback schemes with private  message points. Section~\ref{sec:results} presents our main results. Finally, Sections~\ref{sec:proposition1} and~\ref{sec:theorem2} contain the proofs of our Theorems~\ref{the0} and \ref{thm:2}.

\emph{Notation:}
Let $\mathcal{R}$ denote the set of reals and $\mathbb{Z}^+$ the set of positive integers. Also, let $\set{K}$ denote the discrete set $\set{K}:=\{1,\ldots, K\}$, for some $K\in\mathbb{Z}^+$.  For a finite set $\set{A}$, we denote by $|\set{A}|$ its cardinality and by $\set{A}^j$, for $j\in\mathbb{Z}^+$, its $j$-fold Cartesian product,  $\set{A}^{j}:=\set{A}_1\times\ldots\times\set{A}_j$.

We use capital letters to denote  random variables and small letters for their realizations, e.g. $X$ and $x$. For  $j\in\mathbb{Z}^+$, we use the short hand notations $X^j$ and $x^j$ for the tuples $X^j:=(X_1,\ldots, X_j)$ and $x^j:=(x_1,\ldots, x_j)$. Vectors are displayed in boldface, e.g.,  $\vect{X}$ and $\vect{x}$ for a random and deterministic vector. Further, $|\cdot|$ denotes the modulus operation for scalars and $\|\cdot\|$ the norm operation for vectors. For matrices we use the font $\mat{A}$, and we use 
 $\|\mat{A}\|_F$ to denote its Frobenius norm.

 The abbreviation i.i.d. stands for \emph{independent and identically distributed}. All logarithms are taken with base $e$, i.e.,  $\log(\cdot)$ denotes the natural logarithm. We denote by
  $\set{Q}(\cdot)$  the tail probability of the standard normal distribution. The operator $\circ$ is used to denote function composition.

  We use the Landau symbols:  $\Omega(n)$ denotes any function that satisfies $\varliminf_{n\to\infty}\frac{\Omega(n)}{n} >0$ and $o(1)$ denotes any function that tends to 0 as $n\to \infty$.

\section{Setup}\label{sec:system}

\subsection{System Model and Capacity}\label{sec:modelBC}
\begin{figure}[!t]
\centering
\includegraphics[width=0.45\textwidth]{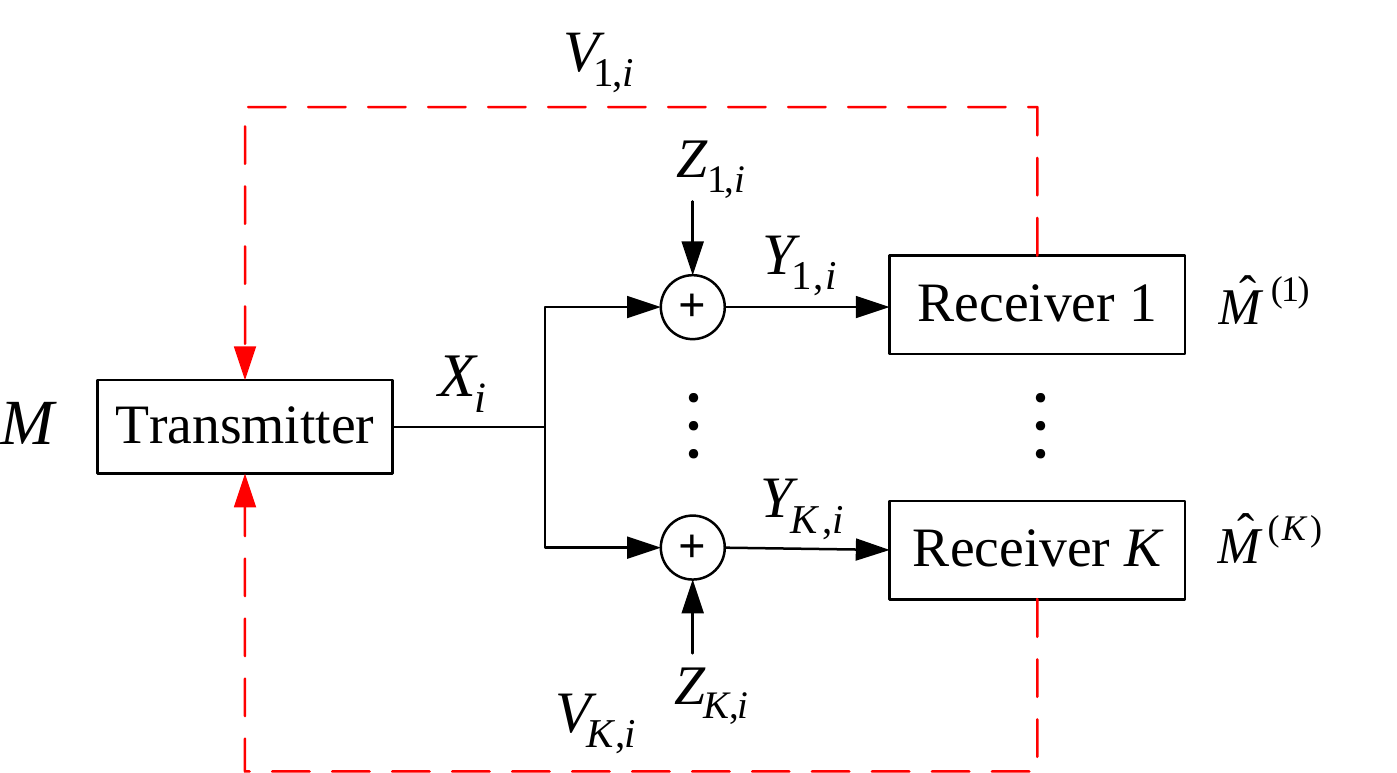}
\caption{$K$-receiver Gaussian Broadcast channel with feedback and common message only.} \label{fig:GausBC}
\vspace{-5mm}
\end{figure}

We consider the $K\geq 2$-receiver Gaussian BC with common message and feedback depicted in Figure~\ref{fig:GausBC}. Specifically, if  $X_i$ denotes the transmitter's channel input at time-$i$, the time-$i$ channel output at Receiver~$k\in\set{K}$ is
\begin{equation}\label{eq:BCk}
Y_{k,i}=X_{i}+Z_{k,i}
\end{equation}
where  $\{Z_{k,i}\}_{i=1}^n$ models the additive noise at Receiver~$k$. The sequence of noises $\{(Z_{1,i}, \ldots,Z_{K,i})\}_{i=1}^n$ is a sequence of i.i.d. centered Gaussian vectors, each of diagonal covariance matrix
\begin{equation}
\mat{K}_z=\begin{pmatrix}
\sigma_1^2  & \cdots &0\\ \vdots & \ddots& \vdots\\ 0 & \cdots &\sigma_K^2
\end{pmatrix}.
\end{equation}
Without loss of generality, we  assume that
\begin{equation}\label{eq:userorder}
\sigma_1^2 \geq \sigma_2^2 \geq \ldots \geq \sigma_K^2.
\end{equation}

The transmitter wishes to convey a common message  $M$ to all receivers, where $M$ is uniformly distributed over the message set $\mathcal{M}:= \{1,...,\lfloor e^{nR}\rfloor \}$ independent of the noise sequences $ \{ Z_{1,i}\}_{i=1}^n,\ldots, \{ Z_{K,i}\}_{i=1}^n$. Here, $n$ denotes the blocklength and $R> 0$ the rate of transmission.
It is assumed that the transmitter has either \emph{rate-limited} or \emph{perfect} feedback from all receivers.
That means, after each channel use $i\in\{1,\ldots, n\}$, each Receiver~$k\in\set{K}$  feeds back a signal $V_{k,i}\in\set{V}_{k,i}$ to the transmitter. The feedback alphabet $\set{V}_{k,i}$ is a design parameter of the scheme. In the case of rate-limited feedback, the signals from Receiver~$k$ have to satisfy:
\begin{equation}\label{eq:fbconstra}
\sum_{i=1}^n H(V_{k,i}) \leq n R_\textnormal{fb},~\quad k\in\set{K}
\end{equation}
where $R_{\textnormal{fb}}$ denotes the symmetric feedback rate. In the case of perfect feedback, we have no constraint on the feedback signals $\{V_{k,i}\}_{i=1}^n$, and it is thus optimal to choose $\set{V}_{k,i}=\Reals$ and
\begin{equation}
V_{k,i}=Y_{k,i},
\end{equation}
because in this way any processing that can be done at the receivers can also be done at the transmitter.

 An encoding strategy is comprised of a sequence of encoding functions $\{f^{(n)}_i\}^n_{i=1}$ of the form
\begin{IEEEeqnarray}{rCl}\label{eq:e}
f_i^{(n)} \colon   \set{M} && \times \set{V}_{1}^{i-1} \times\ldots \times \set{V}_{K}^{i-1} \to  \Reals\IEEEeqnarraynumspace
\end{IEEEeqnarray}
that is used to produce the channel inputs as
\begin{equation}\label{encf}
X_i=f^{(n)}_i(M,V^{i-1}_1,\ldots,V^{i-1}_K), \qquad i\in\{1,\ldots, n\}.
\end{equation}
We impose an expected average block-power constraint $P$ on the channel input sequence:
\begin{equation}\label{eq:power}
   \frac{1}{n}\mathbb{E}\left[\sum_{i=1}^{n}X^2_i\right] \leq P.
\end{equation}

Each Receiver~$k\in\set{K}$ decodes the message $M$ by means of a decoding function $g^{(n)}_k$
of the form
\begin{equation}\label{eq:dec}
g^{(n)}_k \colon \Reals^n \to \set{M}.
\end{equation}
That means, Receiver~$k$ produces as its guess
\begin{equation}
\hat{M}^{(k)}=g^{(n)}_k(Y^n_k).
\end{equation}

An error occurs in the communication if
\begin{equation}\label{eq:error}
(\hat{M}^{(k)}\neq M),
\end{equation}
for some $k\in\set{K}$.
Thus, the average probability of error is
\begin{equation}
 P_e^{(n)}:= \Prv{  \bigcup_{k\in\set{K}} \left( \hat{M}^{(k)}\neq M\right)}.
\end{equation}

We say that \emph{a rate $R>0$ is  achievable} for the described setup if for every $\epsilon>0$ there exists a sequence of encoding and decoding functions $\big\{ \{f_{i}^{(n)}\}_{i=1}^n, \{g_k^{(n)}\}_{k=1}^K\big\}_{n= 1}^{\infty}$ as in \eqref{eq:e} and \eqref{eq:dec} and satisfying the power constraint~\eqref{eq:power} such that for sufficiently large blocklengths $n$ the probability of error $P^{(n)}_{e}<\epsilon$. The supremum of all achievable rates is called the \emph{capacity}. The capacity is the same in the case of perfect feedback, of rate-limited feedback (irrespective of the feedback rate $R_\textnormal{fb}$), and without feedback.  We denote it by $C$ and by assumption~\eqref{eq:userorder} it is given by
\begin{IEEEeqnarray}{rCl}
C = \frac{1}{2} \log\left(1+\frac{P}{\sigma^2_1}\right) .\IEEEeqnarraynumspace
\end{IEEEeqnarray}

Our main interest in this paper is in the speed of decay of the probability of error at rates $R<C$.
\begin{definition} Given a positive integer $L$, we say that the $L$-th order exponential decay in the probability of error is achievable at a given rate $R<C$, if there exists a sequence of schemes of rate $R$ such that their probabilities of error $\{P_e^{(n)}\}_{n=1}^\infty$ satisfy
\begin{equation}\label{eq:exp}
\varliminf_{n\to \infty} \frac{1}{n} \log  \log \ldots \log (-\log  P_{e}^{(n)} )>0,
\end{equation}
where  the number of logarithms in \eqref{eq:exp} is $L$.
\end{definition}

\subsection{Linear-Feedback Schemes with a Message Point}
When considering perfect feedback, we will  be interested in the class of coding schemes where the feedback is only used in a linear fashion. Specifically, we say that a scheme is a linear-feedback scheme with a message point, if the sequence of encoding functions $\{f_i^{(n)}\}_{i=1}^n$ is of the form
\begin{equation}f_i^{(n)}= \Phi^{(n)} \circ L_i^{(n)}
\end{equation}with
\begin{subequations}\label{eq:linearfb0}
\begin{IEEEeqnarray}{rCl}
\Phi^{(n)}&\colon& M \mapsto \Theta^{(n)} \in \Reals\\
L_{i}^{(n)} &\colon& (\Theta^{(n)}, Y_1^{i-1}, \ldots,Y_K^{i-1}) \mapsto X_i
\end{IEEEeqnarray}
\end{subequations}
where $\Phi^{(n)}$ is an arbitrary function on the respective domains and $L_i^{(n)}$ is a \emph{linear mapping} on the respective domains. There is no constraint on the decoding functions $g_1^{(n)}, \ldots, g_K^{(n)}$.

By the definition of a linear-feedback coding scheme in~\eqref{eq:linearfb0}, for each blocklength $n$,  if we define $\vect{X}= \trans{(X_1, \ldots, X_n)}$, $\vect{Y}_k= \trans{(Y_{k,1},\ldots, Y_{k,n})}$, and $\vect{Z}_k= \trans{(Z_{k,1},\ldots, Z_{k,n})}$,  for $k\in\set{K}$,  the channel inputs can be written as:
\begin{equation}
\vect{X}= \Theta^{(n)} \cdot\vect{d}^{(n)}+ \sum_{k=1}^{K} \mat{A}_k^{(n)} \vect{Z}_k,
\end{equation}
for some $n$-dimensional vector $\vect{d}^{(n)}$ and  $n$-by-$n$ strictly lower-triangular matrices $\mat{A}_{1}^{(n)}, \ldots, \mat{A}_{K}^{(n)}$. (The lower-triangularity of $\mat{A}_1^{(n)}, \ldots, \mat{A}_K^{(n)}$ ensures that the feedback is used in a strictly causal fashion.) Thus, for a given blocklength $n$, a linear-feedback scheme is described by the tuple
\begin{equation}\label{eq:tuplelinfb}
\Phi^{(n)}, \vect{d}^{(n)}, \mat{A}_1^{(n)}, \ldots, \mat{A}_K^{(n)}, g_1^{(n)},\ldots,  g_K^{(n)}.
\end{equation}
It satisfies the average block-power constraint~\eqref{eq:power} whenever
\begin{equation}\label{eq:powerABd}
\sum_{k=1}^K\|\mat{A}_k^{(n)}\|_F^2 \sigma_k^2+\|\vect{d}^{(n)}\|^2\E{|\Theta^{(n)}|^2}\leq nP.
\end{equation}

The supremum of all rates that are achievable with a sequence of linear-feedback schemes with a message point is denoted by $C^{\textnormal{(Lin)}}$.

\section{For comparison: Setup with Private Messages and Perfect Feedback}\label{sec: private}
\subsection{System Model and Capacity Region}
For comparison, we also discuss the scenario where the  transmitter wishes to communicate a  private message~$M_{k}$ to each Receiver~$k\in\set{K}$ over the Gaussian BC in Figure~\ref{fig:GausBC}. The messages $M_1, \ldots, M_K$ are assumed independent of each other and of the noise sequences $\{Z_{1,i}\}_{i=1}^n,\ldots, \{Z_{K,i}\}_{i=1}^n$ and each $M_k$ is uniformly distributed over the set $\set{M}_k:=  \{1,\ldots, \lfloor e^{n R_k} \rfloor\}$. For this setup we restrict attention to perfect feedback. Thus, here the channel inputs are produced as
\begin{equation}\label{eq:encdef2}
X_i=f^{(n)}_{\textnormal{priv},i}(M_1,\ldots, M_K,Y_1^{i-1},\ldots,Y_K^{i-1}), ~ i\in\{1,\ldots, n\}.
\end{equation}
Receiver~$k$ produces the guess
\begin{eqnarray}
\hat{M}_{k}&= &g_{\textnormal{priv},k}^{(n)}({Y}_{k}^n)
\end{eqnarray}
where the sequence of decoding function $\{g_{\textnormal{priv},k}^{(n)}\}^K_{k=1}$ is of the form
\begin{IEEEeqnarray}{rCl}
g_{\textnormal{priv},k}^{(n)} &\colon& \mathbb{R}^{n} \rightarrow \{1,\ldots, \lfloor e^{n R_k}
\rfloor\} ,\label{eq:decdef12}
\end{IEEEeqnarray}

A rate tuple $(R_1,\ldots,R_K)$ is said to be achievable if for every
blocklength $n$ there exists a set of $n$ encoding functions
 as in
\eqref{eq:encdef2} satisfying the power constraint \eqref{eq:power}
and a set of $K$ decoding functions as in
\eqref{eq:decdef12} such that the probability of
decoding error  tends to 0 as the blocklength $n$
tends to infinity, i.e.,
\begin{equation*}
\lim_{n\rightarrow \infty}\text{Pr}\left[(M_1,\ldots,M_K)\neq
(\hat{M}_{1},\ldots,\hat{M}_{K})\right] =0.
\end{equation*}
The closure of the set of all achievable rate tuples $(R_1,\ldots,R_K)$ is
called the \emph{capacity region}. We denote it $\set{C}_{\textnormal{private}}$. This capacity region is unknown to date. (The sum-capacity in the high-SNR asymptotic regime is derived in~\cite{wigger}.) Achievable regions were presented in~\cite{minero,wigger,wigger2011}; the tighest known outer bound on capacity for $K=2$ users was presented in~\cite{ozar} based on the idea of revealing one of the output sequences to the other receiver. This idea generalizes to $K\geq 2$ users, and leads to the following outer bound \cite{Kramer,Bergmans}:
\begin{Lemma}\label{lem:privateouterbound}
If the rate tuple $(R_1, \ldots, R_K)$ lies in $\set{C}_{\textnormal{private}}$, then there exist coefficients $\alpha_1,\ldots, \alpha_K$ in the closed interval $[0,1]$ such that for each $k\in\set{K}$,
\begin{IEEEeqnarray}{rCl}
R_k\leq \frac{1}{2} \log \left(1+ \frac{\alpha_k P}{(1- \alpha_1-\ldots- \alpha_k)P+N_k } \right)
\end{IEEEeqnarray}
where
\begin{equation}\label{eq:newnoise}
N_k = \left(\sum_{k'=1}^{k} \frac{1 }{\sigma_{k'}^2 }\right)^{-1}, \qquad k\in\set{K}.
\end{equation}
\end{Lemma}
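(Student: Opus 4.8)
The plan is to prove the bound by a genie-aided argument that turns the original broadcast channel into a \emph{physically degraded} one, for which feedback is known to be useless and whose capacity region is given by superposition coding. Concretely, I would first enhance each Receiver~$k$ by revealing to it the output sequences $Y_1^n,\ldots,Y_{k-1}^n$ of all the noisier receivers. This is exactly the ``revealing one output to the other receiver'' idea of \cite{ozar} extended to $K$ users, and since it only adds information at the decoders while leaving the encoder (and its feedback) untouched, the resulting capacity region contains $\set{C}_{\textnormal{private}}$. Hence any achievable tuple $(R_1,\ldots,R_K)$ is also achievable for the enhanced channel, and it suffices to prove the bound there.

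The key structural observations are two. First, the enhanced receivers are \emph{nested}: the observation $(Y_1^n,\ldots,Y_{k-1}^n)$ of enhanced Receiver~$(k-1)$ is a deterministic function of the observation $(Y_1^n,\ldots,Y_k^n)$ of enhanced Receiver~$k$. Consequently the Markov chain $X^n \to (Y_1^n,\ldots,Y_K^n) \to \cdots \to Y_1^n$ holds and the enhanced BC is physically degraded, regardless of the noise ordering. Second, since the noises are independent Gaussians, a sufficient statistic for $X_i$ given $(Y_{1,i},\ldots,Y_{k,i})$ is the precision-weighted combination $\sum_{k'\leq k}\sigma_{k'}^{-2}Y_{k',i}$, which is a scalar Gaussian observation of $X_i$ corrupted by noise of variance exactly $N_k=\left(\sum_{k'=1}^k\sigma_{k'}^{-2}\right)^{-1}$ as in \eqref{eq:newnoise}. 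Thus, informationally, enhanced Receiver~$k$ sees a scalar Gaussian channel with noise variance $N_k$, and since $N_k$ is non-increasing in $k$, the enhanced channel is equivalent to a scalar degraded Gaussian BC ordered from the weakest (Receiver~$1$) to the strongest (Receiver~$K$), consistent with \eqref{eq:userorder}.

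I would then invoke the classical result that feedback does not enlarge the capacity region of a physically degraded broadcast channel \cite{Kramer}; applied to the enhanced channel (whose feedback to the transmitter, namely all outputs $Y_1^{i-1},\ldots,Y_K^{i-1}$, is precisely the output of the strongest enhanced receiver), this shows that its feedback capacity region coincides with its no-feedback capacity region. The latter is the standard degraded Gaussian BC region obtained by superposition coding \cite{Bergmans}: introducing a power split $\alpha_1,\ldots,\alpha_K\in[0,1]$, each enhanced Receiver~$k$ decodes its layer after stripping the layers of the weaker receivers $1,\ldots,k-1$ and treating the layers of the stronger receivers $k+1,\ldots,K$ (of total power $(1-\alpha_1-\cdots-\alpha_k)P$) as noise, which yields exactly the claimed expression. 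Combined with the inclusion $\set{C}_{\textnormal{private}}\subseteq(\text{enhanced region})$, this completes the argument.

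The step I expect to be the main obstacle is the rigorous invocation of the ``feedback is useless'' property in the present setting. One must verify that the feedback available to the transmitter in the enhanced channel indeed matches the hypotheses of the degraded-BC feedback theorem (i.e.\ that giving the transmitter all of $Y_1^{i-1},\ldots,Y_K^{i-1}$ is consistent with feedback from the single strongest enhanced receiver), and that the reduction to the scalar effective-noise channels via sufficient statistics remains valid once the encoder is allowed to depend on past outputs through \eqref{eq:encdef2}. Care is also needed because that theorem is classically stated for discrete memoryless channels, so one either appeals to its known Gaussian counterpart or argues by discretization; this is where most of the technical bookkeeping will lie.
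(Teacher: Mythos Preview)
Your proposal is correct and follows essentially the same approach as the paper's own proof: reveal $Y_1^n,\ldots,Y_{k-1}^n$ to Receiver~$k$, observe that the enhanced channel is physically degraded, invoke the result that feedback does not enlarge the capacity region of a degraded BC, and evaluate the resulting Gaussian degraded-BC region via the sufficient-statistic reduction to scalar noise variances $N_k$. One small correction: the ``feedback is useless for degraded BCs'' result is due to El~Gamal \cite{elgamal}, not \cite{Kramer}; with that classical result in hand your anticipated ``main obstacle'' is not really an obstacle, since in the enhanced channel the feedback $(Y_1^{i-1},\ldots,Y_K^{i-1})$ is exactly the output of the strongest receiver.
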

\begin{proof} Let a genie reveal each output sequence $Y_k^n$ to Receivers~$k+1, \ldots, K$. The resulting BC is physically degraded, and thus its capacity is the same as without feedback~\cite{elgamal} and known. Evaluating this capacity region  readily gives the outer bound in the lemma.
\end{proof}

\subsection{Linear-Feedback Schemes with Message Points}

A \emph{linear-feedback scheme with message points} for this setup with independent messages consists of a sequence of $K$ decoding functions as in~\eqref{eq:decdef12} and of a sequence of encoding functions $\{f^{(n)}_{\textnormal{priv},i}\}_{i=1}^n$ of the form
\begin{equation}f_{\textnormal{priv},i}^{(n)}= \Phi^{(n)}_{\textnormal{priv}} \circ L_{\textnormal{priv},i}^{(n)}
\end{equation}with
\begin{subequations}\label{eq:linearfb}
\begin{IEEEeqnarray}{rCl}
\Phi^{(n)}_{\textnormal{priv}}&\colon& \begin{pmatrix} M_1\\ \vdots\\M_K\end{pmatrix} \mapsto \boldsymbol \Theta:= \begin{pmatrix}  \Theta_1 \\ \vdots\\ \Theta_K \end{pmatrix} \in \Reals^K\\
L_{\textnormal{priv},i}^{(n)} &\colon& (\boldsymbol \Theta, Y_1^{i-1},\ldots, Y_K^{i-1}) \mapsto X_i
\end{IEEEeqnarray}
\end{subequations}
where $\Phi^{(n)}_{\textnormal{priv}}$ is an arbitrary function on the respective domains and $L_{\textnormal{priv},i}^{(n)}$ is a \emph{linear mapping} on the respective domains.

We denote the closure of the set of rate tuples $(R_1, \ldots, R_K)$ that are achievable with a linear-feedback scheme with message points by $\set{C}_{\textnormal{private}}^{(\textnormal{Lin})}$. This region  is unknown to date.

\section{Main Results} \label{sec:results}

The main question we wish to answer is whether for the Gaussian BC with common message a super-exponential decay in the probability of error is achievable for all rates $R<C$.
We first show that the class of linear-feedback schemes with message point fails in achieving this goal even with perfect feedback, because  it does not achieve capacity (Theorem~\ref{the0} and Corollary~\ref{cor:1}). As the number of receivers $K$ increases, 
the largest rate that is achievable with linear-feedback schemes with a message point even vanishes (Proposition~\ref{prop:condnoi}). However, as we show then, a super-exponential decay in the probability of error is still possible by means of an intermittent feedback scheme similar to \cite{reza}  (Theorem~\ref{thm:2}).


\begin{Proposition}\label{the1}
If a sequence of linear-feedback schemes with a message point achieves a  \emph{common rate} $R>0$, then there exists a sequence of linear-feedback schemes with message points that achieves the \emph{private rates} $(R,\ldots,R)\in\Reals^{K}$:
\begin{equation}
0 < R \leq C^{(\textnormal{Lin})} \quad \Longrightarrow \quad {(R,\ldots,R)} \in \set{C}_{\textnormal{private}}^{(\textnormal{Lin})}.
\end{equation}
\end{Proposition}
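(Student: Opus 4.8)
The plan is to turn the given common-message scheme, block by block, into a private-message scheme of the same rate, working entirely in the matrix description of \eqref{eq:tuplelinfb}. I would fix a blocklength-$n$ common-message scheme of rate $R$ with vanishing error probability and write its inputs as $\vect{X}=\Theta^{(n)}\vect{d}^{(n)}+\sum_{k=1}^{K}\mat{A}_k^{(n)}\vect{Z}_k$. Two facts are then recorded: the power budget \eqref{eq:powerABd}, and the statement that for \emph{every} $k\in\set{K}$ the decoder $g_k^{(n)}$ recovers the single point $\Theta^{(n)}$ from $\vect{Y}_k$ with error $o(1)$. The feature I want to exploit is that one and the same linear map, together with the noise-recycling matrices $\mat{A}_1^{(n)},\ldots,\mat{A}_K^{(n)}$, simultaneously presents a reliably decodable point to each of the $K$ receivers, at the feedback-boosted effective SNR that supports rate $R$.

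Before building the private scheme I would stress why the obvious reductions are insufficient, since this is what dictates the construction. Concatenating or time-sharing $K$ copies of the common scheme, using the $k$-th copy to carry $M_k$, makes every receiver decode every copy but spreads each private message over a block of length $\sim Kn$; it therefore achieves only the tuple $(R/K,\ldots,R/K)$. Loading all $K$ messages into a single point likewise forces each receiver to decode the whole point and caps the per-user rate. Hence the private scheme must carry $K$ \emph{independent} points $\Theta_1,\ldots,\Theta_K$, one per private message, \emph{simultaneously} over a block of length $\sim n$, and receiver $k$ must isolate its own $\Theta_k$ at the \emph{full} rate $R$.

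For the construction I would superimpose $K$ layers, the $k$-th layer being a copy of the common scheme's linear encoding driven by $\Theta_k$ and recycling $\mat{A}_k^{(n)}$ so as to cancel, at receiver $k$, the same noise that the original scheme cancelled there. Inputs of the form $\vect{X}=\sum_{k=1}^{K}\Theta_k\vect{d}_k+\sum_{k=1}^{K}\mat{A}_k^{(n)}\vect{Z}_k$ suggest themselves, with direction vectors $\vect{d}_k$ and a power split chosen according to the ordering \eqref{eq:userorder}: receiver $k$ first cancels the layers intended for the noisier receivers $1,\ldots,k-1$ (the ``stronger'' layers) and then decodes $\Theta_k$ while the layers for $k+1,\ldots,K$ act as residual interference. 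The layers and the split are to be tuned so that, at each receiver, the LMMSE-error (equivalently, mutual-information) condition that guaranteed rate $R$ for $\Theta^{(n)}$ in the common scheme is reproduced for that receiver's own point; the power constraint is then the analogue of \eqref{eq:powerABd}, namely $\sum_{k=1}^{K}\|\mat{A}_k^{(n)}\|_F^2\sigma_k^2+\sum_{k=1}^{K}\|\vect{d}_k\|^2\,\E{|\Theta_k|^2}\le nP$.

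I expect the hard part to be precisely the step that separates this from the failed reductions: showing that each receiver extracts its own point at the full rate $R$ --- not a reduced rate --- even though all $K$ points share a single channel input and a single power budget, and even though the common scheme's rate $R$ may exceed the no-feedback symmetric private capacity and hence relies on the feedback-boosted SNR being reproduced layer by layer. Controlling the residual interference from the weaker layers and the accumulation of cancellation errors, while keeping the total power within $nP$, is the crux, and it is here that the degraded ordering \eqref{eq:userorder} and the second-order (covariance) structure of linear schemes must be used. Once this is in place, letting $n\to\infty$ yields $(R,\ldots,R)\in\set{C}_{\textnormal{private}}^{(\textnormal{Lin})}$, which is exactly what the suboptimality argument combines with the outer bound of Lemma~\ref{lem:privateouterbound}.
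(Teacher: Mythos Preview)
Your superposition-plus-SIC plan diverges from the paper's construction and has a genuine obstacle at exactly the spot you flag as ``the hard part.'' The recycling matrices $\mat{A}_1^{(n)},\ldots,\mat{A}_K^{(n)}$ drive the \emph{noise} term $c_k^{(n)}$ exponentially small at receiver $k$, but they do nothing to the cross-layer interference $\sum_{k'\neq k}\Theta_{k'}\,(\vect{v}_k\vect{d}_{k'})$, which is not fed back and is not attenuated by the $\mat{A}_j$'s. For your SIC step, receiver $k$ must first decode $\Theta_{k'}$ (for $k'<k$) at full rate $R$; but the projection $\vect{v}_{k'}$ that suppresses noise at receiver $k'$ yields, at receiver $k$, a noise term governed by $\|\vect{v}_{k'}(\mat{I}+\mat{A}_k)\|$ rather than $\|\vect{v}_{k'}(\mat{I}+\mat{A}_{k'})\|$, and nothing in the common-message hypothesis bounds this cross quantity. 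The degraded ordering \eqref{eq:userorder} alone does not rescue this, and engineering the $\vect{d}_k$ to be simultaneously aligned with each $\vect{v}_k$ and nulled against the other $\vect{v}_{k'}$, all within the single power budget, is an unproved (and possibly infeasible) step.

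The paper avoids the interference question altogether by \emph{not} superimposing the message points. Its blocklength-$(n+2K)$ scheme spends $K$ initialization slots sending $\Theta_1,\ldots,\Theta_K$ one at a time (so receiver $k$ observes $\Theta_k+Z_{k,1-k}$); the $n$ regular slots then carry only the noise-recycling part $\sum_k \mat{A}_k^{(n)}\tilde{\vect{Z}}_k$ with \emph{no} message component; and one extra slot $\tilde{j}_k$ per receiver re-transmits the regular input $X_{j_k}$ with the initialization noise $Z_{k,1-k}$ added on top. Receiver $k$ then LMMSE-estimates $Z_{k,1-k}$ from its regular-plus-extra outputs and subtracts it from the initialization observation; the very computation that shows $c_k^{(n)}\le e^{-2nR(1+o(1))}$ in the common scheme now shows this estimation error is small enough to recover $\Theta_k$ at rate $R$. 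The indices $j_k$ and projections $\vect{v}_k$ making this work are supplied by a separate lemma (Lemma~\ref{lem:vlog}), extracted from the common scheme via Fano plus a combinatorial argument. Because the $K$ message points never coexist in the same channel input, no layer ever interferes with another, and that is why the full rate $R$ survives at every receiver without any power split or SIC.
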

\begin{proof}
See Section~\ref{sec:proposition1}.
\end{proof}
Proposition~\ref{the1} and the upper bound in Lemma~\ref{lem:privateouterbound} yield the following result:

\begin{Theorem} \label{the0}
We have:
\begin{IEEEeqnarray}{rCl}\label{eq:upperbound}
C^{\textnormal{(Lin)}} \leq \frac{1}{2}\log\left(1+ \frac{\alpha_1^\star P}{(1- \alpha_1^\star)P+\sigma_1^2}\right)
\end{IEEEeqnarray}
where $\alpha_1^\star$ lies in the open interval $(0,1)$ and is such that there exist $\alpha_2^\star,\ldots, \alpha_K^\star\in(0,1)$ that satisfy
\begin{subequations}\label{eq:alphastar}
\begin{IEEEeqnarray}{rCl}\label{eq:alphasum}
\alpha_1^\star+ \alpha_2^\star+ \ldots + \alpha_K^\star=1
\end{IEEEeqnarray}
and for $k\in\{2,\ldots, K\}$:
\begin{IEEEeqnarray}{rCl}\label{eq:Requal}
\lefteqn{
\frac{1}{2} \log\left( 1+ \frac{\alpha_k^\star P}{(1-\alpha_1^\star- \alpha_2^\star-\ldots- \alpha_{k}^\star)P+ N_k}\right)}\qquad \nonumber\\& = & \frac{1}{2} \log\left( 1+ \frac{\alpha_1^\star P}{(1-\alpha_1^\star) P+\sigma_1^2}\right) \hspace{2.4cm}
\end{IEEEeqnarray}
\end{subequations}
where the noise variances $\{N_k\}^K_{k=1}$ are defined in~\eqref{eq:newnoise}.
\end{Theorem}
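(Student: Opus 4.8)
The plan is to combine the two ingredients stated just before the theorem. By Proposition~\ref{the1}, every common rate $R$ with $0<R\le C^{(\textnormal{Lin})}$ yields a private rate tuple $(R,\ldots,R)\in\set{C}_{\textnormal{private}}^{(\textnormal{Lin})}\subseteq\set{C}_{\textnormal{private}}$, and Lemma~\ref{lem:privateouterbound} then furnishes coefficients $\alpha_1,\ldots,\alpha_K\in[0,1]$ (arising from a degraded-broadcast argument, hence with partial sums $\alpha_1+\cdots+\alpha_k\le1$) such that $R\le r_k(\boldsymbol\alpha)$ for every $k$, where I abbreviate
\[
r_k(\boldsymbol\alpha):=\tfrac12\log\!\Big(1+\tfrac{\alpha_k P}{(1-\alpha_1-\cdots-\alpha_k)P+N_k}\Big).
\]
Letting $R\uparrow C^{(\textnormal{Lin})}$, this shows $C^{(\textnormal{Lin})}\le R^\star:=\max_{\boldsymbol\alpha}\min_k r_k(\boldsymbol\alpha)$, the largest symmetric rate contained in the degraded outer-bound region. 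The whole theorem thus reduces to characterizing this max--min, and I would spend the remainder of the argument on that optimization.

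For the optimization I would pass to the partial sums $s_k:=\alpha_1+\cdots+\alpha_k$, with $s_0=0$ and $s_K\le1$, so that $\alpha_k=s_k-s_{k-1}$ and Receiver~$k$ sees interference power $(1-s_k)P$. Writing the target SNR as $\gamma:=e^{2R}-1$, the requirement $r_k(\boldsymbol\alpha)\ge R$ becomes the affine condition $(s_k-s_{k-1})P\ge\gamma\big[(1-s_k)P+N_k\big]$, which rearranges into the scalar lower bound
\[
s_k\ \ge\ \phi_k(s_{k-1}):=\frac{s_{k-1}+\gamma+\gamma N_k/P}{1+\gamma}.
\]
Each $\phi_k$ is increasing in its argument and satisfies $\phi_k(s_{k-1})\ge s_{k-1}$ on $[0,1]$, so the monotonicity constraints $s_{k-1}\le s_k$ hold automatically. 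Consequently the cheapest way to meet all $K$ constraints (the one minimizing $s_K$) is to make every constraint tight, $s_k=\phi_k(s_{k-1})$, and symmetric rate $R$ is feasible in the outer region if and only if this tight recursion, started at $s_0=0$, returns $s_K\le1$. This reformulation does the real work: it shows \emph{simultaneously} that the optimal allocation equalizes all the rate constraints (the tight $\phi_k$'s) and exhausts the power budget ($s_K=1$), which are precisely conditions \eqref{eq:Requal} and \eqref{eq:alphasum}.

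It then remains to locate the threshold and verify the interior property. The value $s_K=s_K(\gamma)$ produced by the tight recursion is continuous and strictly increasing in $\gamma$, with $s_K(0)=0$ and $\lim_{\gamma\to\infty}s_K(\gamma)=1+N_K/P>1$; since this limit exceeds $1$, there is a unique $\gamma^\star>0$ with $s_K(\gamma^\star)=1$, and $R^\star=\tfrac12\log(1+\gamma^\star)$. Evaluating the (now tight) constraint at $k=1$, where $N_1=\sigma_1^2$, gives exactly the right-hand side of \eqref{eq:upperbound}. Setting $\alpha_k^\star:=s_k(\gamma^\star)-s_{k-1}(\gamma^\star)$, the strict inequality $\phi_k(s_{k-1})>s_{k-1}$ at $\gamma^\star>0$ forces every increment to be positive, so each $\alpha_k^\star>0$; since there are $K\ge2$ of them summing to one, each lies in $(0,1)$, in particular $\alpha_1^\star\in(0,1)$ as claimed.

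I expect the main obstacle to be this optimization step rather than the invocation of Proposition~\ref{the1} and Lemma~\ref{lem:privateouterbound}: one must argue cleanly that the symmetric-rate maximum over the $K$-dimensional region is attained with all rate constraints tight and with $\sum_k\alpha_k^\star=1$. The partial-sum/feasibility reformulation above is designed to make exactly this transparent, reducing the claim to elementary monotonicity of the scalar maps $\phi_k$; the remaining bookkeeping (continuity and strict monotonicity of $s_K(\gamma)$, its limiting value, and positivity of the increments) is routine.
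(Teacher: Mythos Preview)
Your proposal is correct and follows the same route the paper indicates: invoke Proposition~\ref{the1} to pass from $C^{(\textnormal{Lin})}$ to the symmetric point $(R,\ldots,R)\in\set{C}_{\textnormal{private}}$, then apply Lemma~\ref{lem:privateouterbound}. The paper states only that these two ingredients ``yield'' Theorem~\ref{the0} and does not spell out the max--min optimization; your partial-sum recursion $s_k=\phi_k(s_{k-1})$ is a clean way to fill that gap, and your checks (existence of $\gamma^\star$, strict positivity of each $\alpha_k^\star$) are sound. One small point worth making explicit in the write-up: the feasibility criterion ``$s_K(\gamma)\le1$'' automatically forces all intermediate $s_k\le1$ (since $\phi_k(1)>1$ and $\phi_k$ is increasing, once the sequence exceeds $1$ it stays above $1$), which is what guarantees the $\alpha_k^\star$ are nonnegative and the monotonicity argument for $s_K(\gamma)$ applies on the relevant range.
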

Since $\alpha_1^\star$ is strictly smaller than $1$, irrespective of $K$ and the noise variances $\sigma_1^2, \ldots, \sigma_K^2$, we obtain the following corollary.
\begin{corollary} \label{cor:1}
Linear-feedback schemes with a message point cannot achieve the capacity of the Gaussian BC with common message:
\begin{equation}
C^{\textnormal{(Lin)}} < C
\end{equation}
where the inequality is strict.
\end{corollary}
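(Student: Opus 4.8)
The plan is to deduce Corollary~\ref{cor:1} directly from Theorem~\ref{the0}, which has already produced both the explicit upper bound \eqref{eq:upperbound} on $C^{\textnormal{(Lin)}}$ and the decisive structural fact that $\alpha_1^\star$ lies \emph{strictly} inside $(0,1)$. Since $C=\frac{1}{2}\log(1+P/\sigma_1^2)$ and the map $x\mapsto\frac{1}{2}\log(1+x)$ is strictly increasing, the first step I would take is to reduce the claimed strict inequality $C^{\textnormal{(Lin)}}<C$, via \eqref{eq:upperbound}, to the purely scalar strict inequality $\frac{\alpha_1^\star P}{(1-\alpha_1^\star)P+\sigma_1^2}<\frac{P}{\sigma_1^2}$; it then suffices to compare the two arguments of the logarithm.

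The second step is that comparison. Clearing the (strictly positive) denominators and dividing by $P$, the target inequality is equivalent to $\alpha_1^\star\sigma_1^2<(1-\alpha_1^\star)P+\sigma_1^2$, i.e.\ to $0<(1-\alpha_1^\star)(P+\sigma_1^2)$. Because $P>0$ and $\sigma_1^2>0$, this holds if and only if $\alpha_1^\star<1$, and it is \emph{strict} precisely because $\alpha_1^\star$ is bounded away from $1$. Theorem~\ref{the0} supplies exactly this, $\alpha_1^\star\in(0,1)$, so chaining the two steps yields $C^{\textnormal{(Lin)}}\le\frac{1}{2}\log\!\left(1+\frac{\alpha_1^\star P}{(1-\alpha_1^\star)P+\sigma_1^2}\right)<\frac{1}{2}\log\!\left(1+\frac{P}{\sigma_1^2}\right)=C$, which is the corollary.

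Thus, once the theorem is granted, the corollary is essentially immediate, and the only point demanding care is strictness: a merely weak bound $\alpha_1^\star\le 1$ would give only $C^{\textnormal{(Lin)}}\le C$, a fact already known without feedback. The main obstacle therefore lives entirely inside Theorem~\ref{the0}, namely in ruling out $\alpha_1^\star=1$. I would make explicit where this comes from: the theorem pins down $\alpha^\star$ through the rate-equalizing system \eqref{eq:alphastar}, where $\alpha_1^\star+\cdots+\alpha_K^\star=1$ with each $\alpha_k^\star\in(0,1)$. Since there are $K\ge 2$ receivers and the equalized common rate is positive, each share $\alpha_2^\star,\ldots,\alpha_K^\star$ must be strictly positive — if some $\alpha_k^\star$ vanished, the corresponding expression in \eqref{eq:Requal} would evaluate to $\frac{1}{2}\log 1=0$, contradicting the positive common rate — and hence $\alpha_1^\star=1-\sum_{k\ge 2}\alpha_k^\star<1$. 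This is the structural reason linear-feedback schemes fall short of capacity: the weakest receiver, whose point-to-point rate equals $C$, is forced to tolerate the strictly positive interference power $(1-\alpha_1^\star)P$ carried for the other receivers, and by the elementary computation above this strictly lowers the guaranteed rate below $C$.
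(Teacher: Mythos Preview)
Your proposal is correct and matches the paper's own argument: the paper derives the corollary in a single sentence from Theorem~\ref{the0}, noting that $\alpha_1^\star<1$ strictly, which is exactly the reduction you carry out in detail. Your additional paragraph explaining \emph{why} $\alpha_1^\star<1$ via the rate-equalizing system~\eqref{eq:alphastar} is a welcome elaboration that the paper leaves implicit in the statement of Theorem~\ref{the0}.
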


\begin{Proposition}\label{prop:condnoi}
If the noise variances $\{\sigma_k^2\}_{k=1}^K$ are such that
\begin{equation}\label{eq:condnoisevariance}
\sum_{k=1}^\infty N_k =\infty,
\end{equation}
then 
\begin{equation}\label{eq:czero}
\lim_{K\to \infty} C^{\textnormal{(Lin)}}=0.
\end{equation}
\end{Proposition}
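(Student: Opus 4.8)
The plan is to invoke the upper bound of Theorem~\ref{the0} and to show that its right-hand side vanishes as $K\to\infty$. Write $r_K$ for the right-hand side of \eqref{eq:upperbound}, so that $C^{\textnormal{(Lin)}}\le r_K$; it therefore suffices to prove $r_K\to 0$. It is convenient to set $\gamma:=e^{2r_K}-1$, so that $r_K=\tfrac12\log(1+\gamma)$ and $r_K\to 0$ if and only if $\gamma\to 0$. I will bound $\gamma$ directly from the defining system \eqref{eq:alphastar}.

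First I would rewrite the equalized-rate conditions in a single uniform form. Note that $N_1=\sigma_1^2$ by \eqref{eq:newnoise}, so the reference expression on the right-hand side of \eqref{eq:Requal} is precisely the $k=1$ instance of the left-hand expression. Hence, abbreviating the partial sums $s_k:=\alpha_1^\star+\cdots+\alpha_k^\star$ (with $s_K=1$ by \eqref{eq:alphasum}), the conditions \eqref{eq:Requal} together with the definition of the reference rate are equivalent to
\begin{equation}
\alpha_k^\star=\gamma\,(1-s_k)+\frac{\gamma}{P}\,N_k,\qquad k\in\{1,\ldots,K\}.
\end{equation}

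The key step is then to sum this identity over $k$. The left-hand side telescopes to $\sum_{k=1}^K\alpha_k^\star=s_K=1$. On the right-hand side, each partial sum obeys $s_k\le s_K=1$ because the $\alpha_k^\star$ are nonnegative, so $\sum_{k=1}^K(1-s_k)\ge 0$. Discarding this nonnegative contribution yields
\begin{equation}
1\;\ge\;\frac{\gamma}{P}\sum_{k=1}^K N_k,\qquad\textnormal{i.e.}\qquad \gamma\;\le\;\frac{P}{\sum_{k=1}^K N_k},
\end{equation}
a bound that applies to the (possibly $K$-dependent) solution of \eqref{eq:alphastar} for each fixed $K$.

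Finally, under the hypothesis \eqref{eq:condnoisevariance} we have $\sum_{k=1}^K N_k\to\infty$ as $K\to\infty$, whence $\gamma\to 0$ and therefore $0\le C^{\textnormal{(Lin)}}\le r_K=\tfrac12\log(1+\gamma)\to 0$, establishing \eqref{eq:czero}. The only steps requiring any care are recognizing the reference rate as the $k=1$ instance of the general expression via $N_1=\sigma_1^2$ (this is what makes the sum telescope cleanly), and dropping the $\sum_k(1-s_k)$ term by nonnegativity of the partial sums; once the summation is set up, the estimate is immediate, so I do not anticipate a serious obstacle. As a consistency check, when all $\sigma_k^2$ are equal one finds $N_k=\sigma_1^2/k$, so $\sum_{k=1}^K N_k=\sigma_1^2\,\sum_{k=1}^K 1/k$ diverges and $\gamma=O(1/\log K)$, matching the claim in the introduction that the achievable rate collapses in the equal-variance case.
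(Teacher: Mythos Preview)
Your proof is correct and follows essentially the same route as the paper: both arguments rewrite the equal-rate constraints \eqref{eq:alphastar} as $\alpha_k^\star P=\gamma\bigl((1-s_k)P+N_k\bigr)$, drop the nonnegative $(1-s_k)P$ contribution, and sum to obtain $\gamma\le P/\sum_{k=1}^K N_k$. The paper phrases the intermediate step as a comparison $\alpha_k^\star\ge (N_k/N_K)\alpha_K^\star$ before summing, whereas you sum the identity directly, but the content and the resulting bound are identical.
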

\begin{proof}
See Appendix~\ref{app1}.
\end{proof}
In Figure \ref{fig:CapVSupper} we plot the upper bond on $C^{(\textnormal{Lin})}$ shown in (\ref{eq:upperbound}), Theorem \ref{the0}, as a function of the number of receivers $K$, which have all the same noise variance $\sigma_1^2=\ldots= \sigma_K^2=1$. As we observe, this upper bound, and thus also $C^{(\textnormal{Lin})}$, tends to 0 as $K$ tends to infinity

\begin{figure}[!t]
\centering
\includegraphics[width=0.45\textwidth]{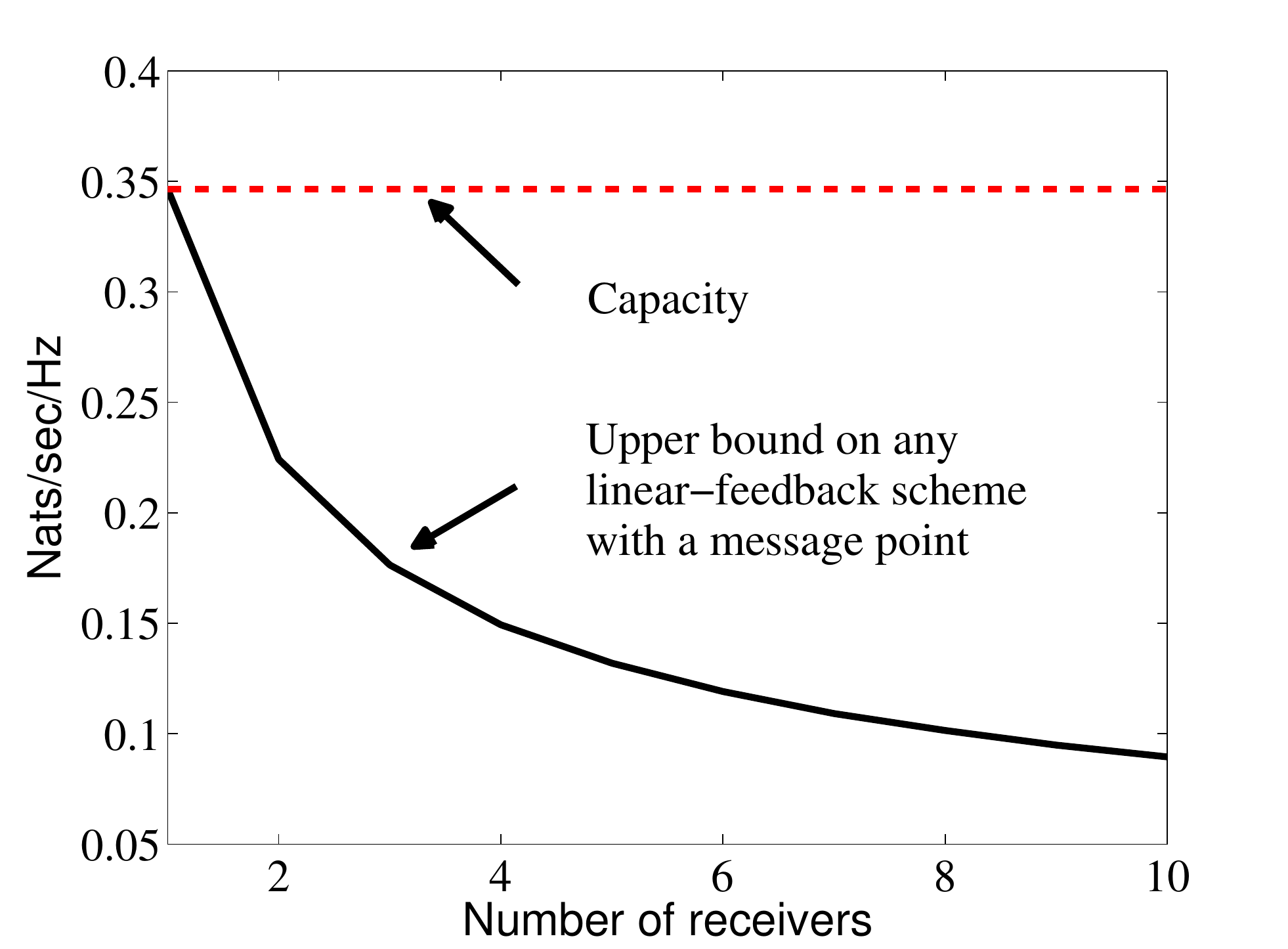}
\caption{Upper bound (\ref{eq:upperbound}) on the rates achievable with linear-feedback schemes with a message point in function of the number of receivers $K$.} \label{fig:CapVSupper}
\vspace{-5mm}
\end{figure}

%
\begin{Theorem}\label{thm:2}
For any positive rate  $R< C$, if the feedback rate
 \begin{equation}
 R_{\textnormal{fb}} \geq (L-1)R,
 \end{equation}
for some positive integer $L$, then  it is possible to achieve an $L$-th order exponential decay of the probability of error in the blocklength.
\end{Theorem}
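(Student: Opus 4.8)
The plan is to prove Theorem~\ref{thm:2} by induction on $L$, formalizing the inductive argument sketched in the introduction. The base case $L=1$ requires no feedback: a standard Gaussian point-to-point code of power $P$ to the worst receiver (Receiver~$1$, with the largest noise variance $\sigma_1^2$) achieves any rate $R<C$ with a first-order exponential decay $P_e^{(n)}=\exp(-\Omega(n))$, by the random-coding error exponent. Since each of the $K$ receivers has noise variance $\sigma_k^2\leq\sigma_1^2$, every receiver decodes with at least this exponent, and a union bound over the $K$ receivers preserves the first-order exponential decay.

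For the inductive step, I would assume that with feedback rate $R_{\textnormal{fb}}\geq(L-2)R$ one can build an $(L-1)$-phase scheme whose probability of error $p_{L-1}:=\Pr[\text{some receiver wrong after phase }L-1]$ has an $(L-1)$-th order exponential decay, i.e. $p_{L-1}=\exp(-E_{L-1}(n))$ with $E_{L-1}(n)$ being an $(L-2)$-fold exponential of an $\Omega(n)$ function. The $L$-phase scheme runs this as its first $L-1$ phases. After phase $L-1$, each receiver feeds back a single temporary guess of $M$, costing at most $R$ nats per channel use of feedback in the block, so the total feedback rate used is at most $(L-1)R\leq R_{\textnormal{fb}}$, consistent with the hypothesis. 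Conditioned on all $K$ temporary guesses being correct (the typical event), the transmitter sends the all-zero sequence in phase $L$ and consumes no power. Only on the rare event that some guess is wrong does the transmitter resend $M$ with a fresh Gaussian code in phase~$L$, and here it may use a large power $P_L$. The key computation is the power budget: the expected average block-power over phase~$L$ is at most $p_{L-1}\cdot P_L$, and to respect~\eqref{eq:power} it suffices that $p_{L-1}P_L=O(P)$, so we may take $P_L$ on the order of $1/p_{L-1}=\exp(E_{L-1}(n))$, which is $(L-1)$-th order exponentially large. Feeding this power into the point-to-point error exponent to the worst receiver, the conditional error probability in phase~$L$ decays like $\exp(-\Omega(n)\cdot\log(1+P_L/\sigma_1^2))=\exp(-\Omega(n)\cdot E_{L-1}(n))$ after a union bound over receivers, giving the $L$-th order exponential decay claimed.

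The two places requiring care are the bookkeeping of the overall error event and the phase-length allocation. The final error event after phase~$L$ is contained in the union of ``phase~$L-1$ guesses already correct but phase~$L$ decoding fails'' and ``phase~$L-1$ guesses wrong and resend fails'', plus a negligible contribution; the dominant term is governed by the product of $p_{L-1}$ with the phase-$L$ conditional error, and one checks the $L$-fold exponential survives. Each phase is allotted a constant fraction $\delta_l n$ of the blocklength, with $\sum_l\delta_l=1$, so that every phase has blocklength $\Omega(n)$ and its error exponent retains the $\Omega(n)$ factor; since $L$ is a fixed positive integer this is a finite, harmless rescaling of rates and powers. The main obstacle, and the crux of the argument, is verifying that allocating the opportunistically saved power $P_L\sim 1/p_{L-1}$ to phase~$L$ simultaneously (i) satisfies the \emph{expected} average block-power constraint~\eqref{eq:power}---which crucially permits huge instantaneous power on a vanishingly rare event---and (ii) boosts the error exponent by exactly one more level of exponentiation; this coupling of the previous phase's error probability to the current phase's available power is what drives the inductive increase in the order of the exponential decay.
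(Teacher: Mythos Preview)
Your inductive framework and power-allocation insight are essentially those of the paper, and your observation that the \emph{expected} power constraint~\eqref{eq:power} permits phase-$L$ power $P_L\sim 1/p_{L-1}$ is exactly the mechanism driving the result. However, there is a genuine gap in the receiver-side protocol that you do not address. After the transmitter either resends $M$ (if some phase-$(L{-}1)$ guess was wrong) or sends the all-zero sequence (if all were correct), \emph{each receiver must decide whether to keep its old guess or to redecode from the phase-$L$ outputs}. The transmitter knows, via feedback, which case occurred; the receivers do not. If a receiver unconditionally keeps its old guess, the scheme gains nothing over the first $L-1$ phases. If it unconditionally redecodes, then in the typical event where all guesses were correct and zeros were sent, it decodes noise and errs with probability near $1$. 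Your error decomposition (``all correct but phase-$L$ decoding fails'' versus ``some wrong and resend fails'') presupposes that the receiver knows which branch it is on, but you supply no mechanism for this.

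The paper closes this gap by dedicating the first channel use of each phase~$l\geq 2$ to an explicit \emph{error signal}: the transmitter sends $\sqrt{P/\gamma_{l-1}}$ if some phase-$(l{-}1)$ guess was wrong and $0$ otherwise, and each receiver thresholds its corresponding output at $T_{l-1}=\tfrac{1}{2}\sqrt{P/\gamma_{l-1}}$ to decide whether to redecode. Because $P/\gamma_{l-1}$ is $(l{-}1)$-th order exponentially large, both the false-alarm and miss probabilities of this binary test are $\mathcal{Q}\bigl(\sqrt{P/\gamma_{l-1}}/(2\sigma_k)\bigr)$, which already has $l$-th order exponential decay. The final error is then bounded by the sum of these two detection-error terms and the phase-$l$ redecoding error $\rho_l$ (events $\mathcal{E}_{1,l},\mathcal{E}_{2,l},\mathcal{E}_{3,l}$ in the paper); it is \emph{not} the product $p_{L-1}\cdot(\text{phase-}L\text{ conditional error})$ that you describe---the $L$-th order decay comes from the boosted power inside $\rho_L$ and inside the $\mathcal{Q}$-function argument, not from multiplying by $p_{L-1}$. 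Once you insert this signaling-and-thresholding step and the corresponding three-way error decomposition, the remainder of your argument aligns with the paper's. A minor secondary point: the paper allocates $(1-\epsilon)n$ channel uses to phase~$1$ and only $\epsilon n/(L-1)$ to each later phase, so that the phase-$1$ rate $R/(1-\epsilon)$ stays strictly below $C$; your constant-fraction allocation works too, provided you enforce $\delta_1>R/C$.
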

\begin{proof}
See Section~\ref{sec:theorem2}.
\end{proof}

\section{Proof of Proposition~\ref{the1}}\label{sec:proposition1}

Let $\delta>0$ be a small real number.
Fix a sequence of rate-$R>0$, power-$(P-\delta)$ linear-feedback schemes that sends a common message point  over the Gaussian BC with probability of error $P_\textnormal{e}^{(n)}$ tending to 0 as $n\to\infty$. For each  $n\in\Integers^+$, let
\begin{equation}\label{eq:tuplesseq}
\Phi^{(n)}, \vect{d}^{(n)}, \mat{A}_1^{(n)}, \ldots, \mat{A}_K^{(n)},  g_1^{(n)}, \ldots, g_K^{(n)}
\end{equation}
denote the parameters of the blocklength-$n$ scheme, which satisfy the power constraint
 \begin{equation}\label{eq:dd}
\E{|\Theta^{(n)}|^2}\cdot \|\vect{d}^{(n)}\|^2+  \sum_{k=1}^K \big\|\mat{A}_k^{(n)}\big\|^2_F \leq  n(P-\delta)
 \end{equation}
 where $\Theta^{(n)}= \Phi^{(n)}(M)$.

We have the following lemma.
\begin{Lemma}\label{lem:vlog}
For each blocklength $n$, there exist $n$-dimensional row-vectors $\vect{v}_1^{(n)}, \ldots, \vect{v}_K^{(n)}$ of unit norms,
\begin{equation}\label{eq:vnorm}
\|\vect{v}_1^{(n)}\|^2=\cdots=\|\vect{v}_K^{(n)}\|^2 =1,
\end{equation}
and $K$ indices $j_1^{(n)}, \ldots, j_K^{(n)}\in\{1,\ldots, n\}$
such that for each $k\in\set{K}$ the following three limits holds:
\begin{enumerate}
\item
\begin{IEEEeqnarray}{rCl}\label{gamma10}
R&\leq & \varliminf\limits_{n \to \infty} - \frac{1}{2n} \log c_k^{(n)}
\end{IEEEeqnarray}
where
\begin{IEEEeqnarray}{rCl}\label{eq:cu}
c_k^{(n)} := \sigma_k^2 \big\|\vect{v}_k^{(n)}\big(\mat{I}+\mat{A}_k^{(n)}\big)\big\|^2+\!\!\!\! \sum_{ k'\in\set{K}\backslash \{k\}}\!\!\! \sigma_{k'}^2\big\|\vect{v}_{k'}^{(n)}\mat{A}_{k'}^{(n)}\big\|^2;\nonumber\\ 
\end{IEEEeqnarray}
\item
\begin{IEEEeqnarray}{rCl}
\lim_{n \to \infty} \frac{1}{n}\E{\left(X^{(n)}_{j_k^{(n)}}\right)^2} = 0 \label{Lemma1A}
\end{IEEEeqnarray}
where for $i\in\{1,\ldots, n\}$, $X^{(n)}_i$ denotes the $i$-th channel input of the blocklength-$n$ scheme; and
\item
\begin{IEEEeqnarray}{rCl}
\lim_{n \to \infty} \frac{1}{2n}\log\big(|v^{(n)}_{k,j_k^{(n)}}|\big)=0 \label{Lemma1C}
\end{IEEEeqnarray}
 where for $i\in\{1,\ldots, n\}$, $v_{k,i}^{(n)}$ denotes the $i$-th component of the vector $\vect{v}_k^{(n)}$.
 \end{enumerate}
\end{Lemma}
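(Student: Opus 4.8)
The plan is to turn each Receiver~$k$ into a scalar Gaussian estimation problem and to read off $\vect v_k^{(n)}$ and $j_k^{(n)}$ from it. Writing the blocklength-$n$ input as $\vect X=\Theta^{(n)}\vect d^{(n)}+\sum_{k'}\mat A_{k'}^{(n)}\vect Z_{k'}$, the output at Receiver~$k$ is $\vect Y_k=\Theta^{(n)}\vect d^{(n)}+\vect W_k$ with effective noise $\vect W_k=(\mat I+\mat A_k^{(n)})\vect Z_k+\sum_{k'\neq k}\mat A_{k'}^{(n)}\vect Z_{k'}$, a centered Gaussian vector of covariance $\mat\Sigma_k:=\sigma_k^2(\mat I+\mat A_k^{(n)})(\mat I+\mat A_k^{(n)})^{\top}+\sum_{k'\neq k}\sigma_{k'}^2\mat A_{k'}^{(n)}(\mat A_{k'}^{(n)})^{\top}$. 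For a unit row-vector $\vect v_k^{(n)}$, the scalar $\vect v_k^{(n)}\vect Y_k=\Theta^{(n)}(\vect v_k^{(n)}\vect d^{(n)})+\vect v_k^{(n)}\vect W_k$ is a noisy observation of $\Theta^{(n)}$ whose noise $\vect v_k^{(n)}\vect W_k$ has variance $c_k^{(n)}=\vect v_k^{(n)}\mat\Sigma_k(\vect v_k^{(n)})^{\top}=\sigma_k^2\|\vect v_k^{(n)}(\mat I+\mat A_k^{(n)})\|^2+\sum_{k'\neq k}\sigma_{k'}^2\|\vect v_k^{(n)}\mat A_{k'}^{(n)}\|^2$, i.e. \eqref{eq:cu}. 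I would take $\vect v_k^{(n)}$ to be the (normalized) matched filter of this scalar channel and establish the three limits separately.

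For \eqref{gamma10} I would argue information-theoretically. Since $\Theta^{(n)}=\Phi^{(n)}(M)$ and the noises are independent of $M$, we have the Markov chain $M\to\Theta^{(n)}\to\vect Y_k$, so Fano's inequality and $P_\textnormal e^{(n)}\to 0$ give $\varliminf_n\frac1n I(\Theta^{(n)};\vect Y_k)\ge\varliminf_n\frac1n I(M;\vect Y_k)\ge R$. Because the signal $\Theta^{(n)}\vect d^{(n)}$ enters through a rank-one matrix, the maximal-entropy bound on $h(\vect Y_k)$ yields $I(\Theta^{(n)};\vect Y_k)\le\frac12\log\bigl(1+\E{|\Theta^{(n)}|^2}\,(\vect d^{(n)})^{\top}\mat\Sigma_k^{-1}\vect d^{(n)}\bigr)$. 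Choosing $\vect v_k^{(n)}\propto\mat\Sigma_k^{-1}\vect d^{(n)}$ of unit norm makes the right-hand SNR exactly $\E{|\Theta^{(n)}|^2}(\vect v_k^{(n)}\vect d^{(n)})^2/c_k^{(n)}$, so this SNR is at least $e^{2nR-o(n)}$. Finally $\|\vect v_k^{(n)}\|=1$ and the power constraint \eqref{eq:dd} give $\E{|\Theta^{(n)}|^2}(\vect v_k^{(n)}\vect d^{(n)})^2\le\E{|\Theta^{(n)}|^2}\|\vect d^{(n)}\|^2\le nP$, whence $c_k^{(n)}\le nP\,e^{-2nR+o(n)}$ and $\varliminf_n-\frac1{2n}\log c_k^{(n)}\ge R$.

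For \eqref{Lemma1A} and \eqref{Lemma1C} I would combine a power count with the choice of $j_k^{(n)}$. The constraint \eqref{eq:power} forces the ``high-power'' set $\mathcal H_\eta:=\{i:\frac1n\E{(X_i^{(n)})^2}>\eta\}$ to have at most $P/\eta$ elements. The aim is to produce a unit $\vect v_k^{(n)}$ that still satisfies \eqref{gamma10} but whose energy lies on the complementary ``low-power'' set $\mathcal L:=\{1,\dots,n\}\setminus\mathcal H_{\eta_n}$, for a slowly vanishing threshold $\eta_n\to 0$ with $n\eta_n\to\infty$. Concretely I would replace the global matched filter by the matched filter of the sub-channel indexed by $\mathcal L$, zero-padded on $\mathcal H_{\eta_n}$; then $c_k^{(n)}=(\vect d^{\mathcal L})^{\top}(\mat\Sigma_k^{\mathcal L})^{-1}\vect d^{\mathcal L}/\|(\mat\Sigma_k^{\mathcal L})^{-1}\vect d^{\mathcal L}\|^2$ involves only the principal submatrix $\mat\Sigma_k^{\mathcal L}$, and the computation above goes through verbatim \emph{provided} the sub-channel SNR $\E{|\Theta^{(n)}|^2}(\vect d^{\mathcal L})^{\top}(\mat\Sigma_k^{\mathcal L})^{-1}\vect d^{\mathcal L}$ is still $e^{2nR-o(n)}$. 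Once $\vect v_k^{(n)}$ is supported on $\mathcal L$ and of unit norm, some coordinate $j_k^{(n)}\in\mathcal L$ has $(v_{k,j_k^{(n)}}^{(n)})^2\ge 1/|\mathcal L|\ge 1/n$; together with $|v_{k,j_k^{(n)}}^{(n)}|\le 1$ this gives \eqref{Lemma1C}, while $j_k^{(n)}\in\mathcal L$ gives $\frac1n\E{(X_{j_k^{(n)}}^{(n)})^2}\le\eta_n\to0$, i.e. \eqref{Lemma1A}.

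The main obstacle is the italicized proviso above: showing that discarding the few high-power coordinates costs only a sub-exponential factor in SNR, equivalently that the small-effective-noise direction does not place almost all of its energy on $\mathcal H_{\eta_n}$. Heuristically this must hold because a single coordinate supplies only polynomial SNR $\le nP/\sigma_K^2$ and there are only $O(1/\eta_n)$ high-power coordinates, so they can jointly carry $O(\log n)=o(n)$ nats about $\Theta^{(n)}$, whereas a fixed rate $R>0$ forces $I(\Theta^{(n)};\vect Y_k)=\Omega(n)$ and hence compels $\vect d^{(n)}$ and the estimator to spread over $\Omega(n)$ coordinates. Turning this into a uniform bound is delicate precisely because the feedback is strictly causal: the matrices $\mat A_k^{(n)}$ are strictly lower-triangular and $(\mat I+\mat A_k^{(n)})^{-1}$ may have exponentially large entries, so $\mat\Sigma_k^{\mathcal L}$ need not inherit a polynomial noise floor and the per-coordinate mutual informations do not simply add. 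Controlling this coupling, so that the loss $I(\Theta^{(n)};\vect Y_k)-I(\Theta^{(n)};\vect Y_k^{\mathcal L})$ is provably $o(n)$, is the step I expect to require the most care.
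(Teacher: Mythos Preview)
Your argument for \eqref{gamma10} is essentially the paper's: Fano plus the Gaussian max-entropy bound reduces $I(\Theta^{(n)};\vect Y_k)$ to a scalar channel along the LMMSE direction $\vect v_k^{(n)}\propto\mat\Sigma_k^{-1}\vect d^{(n)}$, and then Cauchy--Schwarz together with the power constraint kills the numerator.

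For \eqref{Lemma1A}--\eqref{Lemma1C}, however, you try to \emph{modify} $\vect v_k^{(n)}$ so that it is supported on low-power coordinates, and you correctly flag that the hard part is showing the restricted sub-channel still carries $e^{2nR-o(n)}$ SNR. This is a genuine gap: the causal structure means $(\mat I+\mat A_k^{(n)})^{-1}$ can have exponentially large entries, so there is no uniform noise floor on $\mat\Sigma_k^{\mathcal L}$, and your per-coordinate heuristic does not immediately control $I(\Theta^{(n)};\vect Y_k)-I(\Theta^{(n)};\vect Y_k^{\mathcal L})$. The paper sidesteps this entirely by going in the opposite direction: it \emph{keeps} the LMMSE vector $\vect v_k^{(n)}$ from part~1 and instead proves that this vector already has many non-tiny coordinates. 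The key observation is that \eqref{gamma10} forces $\|\vect v_k^{(n)}(\mat I+\mat A_k^{(n)})\|$ to be exponentially small; since $\mat A_k^{(n)}$ is strictly lower-triangular, the $j$-th entry of $\vect v_k^{(n)}(\mat I+\mat A_k^{(n)})$ is $v_{k,j}^{(n)}+\sum_{i>j}v_{k,i}^{(n)}a_{i,j}$, so whenever $|v_{k,j}^{(n)}|$ exceeds the exponential threshold some \emph{later} index $i>j$ must satisfy $|v_{k,i}^{(n)}|\gtrsim |v_{k,j}^{(n)}|/(n^{3/2}\sqrt{P})$ (using $|a_{i,j}|^2\le nP$ from the power constraint). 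Iterating this from the guaranteed index with $|v_{k,i_0}^{(n)}|\ge n^{-1/2}$ produces at least $\log n$ indices with $|v_{k,i}^{(n)}|>n^{-2\log n}$; call this set $\mathcal S_k^{(n)}$. Since $|\mathcal S_k^{(n)}|\to\infty$ while $\sum_i\E{(X_i^{(n)})^2}\le nP$, a pigeonhole gives some $j_k^{(n)}\in\mathcal S_k^{(n)}$ with $\frac1n\E{(X_{j_k^{(n)}}^{(n)})^2}\to 0$, which is \eqref{Lemma1A}; and $n^{-2\log n}<|v_{k,j_k^{(n)}}^{(n)}|\le 1$ gives \eqref{Lemma1C}. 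The lower-triangularity is used in an essential way here, and this is precisely the structural fact your approach was missing.
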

\begin{proof}
See Appendix~\ref{app:fano}.
\end{proof}

In the following, let for each $n\in\Integers^+$, $\vect{v}_1^{(n)}, \ldots, \vect{v}_K^{(n)}$ be $n$-dimensional unit-norm row-vectors  and $j_1^{(n)}, \ldots, j_K^{(n)}$ be positive integers satisfying the limits~\eqref{gamma10}, \eqref{Lemma1A}, and \eqref{Lemma1C}.

We now construct a sequence of linear-feedback schemes with message points that can send  $K$ independent  messages $M_1,\ldots,M_K$ to Receivers~$1,\ldots,K$ at rates
\begin{IEEEeqnarray}{rCl}\label{eq:rate}
R_k &\geq& \Big(\varliminf\limits_{n \to \infty} - \frac{1}{2n} \log c_k^{(n)}\Big)-\epsilon, \qquad k\in\set{K},
\end{IEEEeqnarray}
for an arbitrary small $\epsilon>0$;
with a probability of error that tends to 0 as the blocklength tends to infinity; and with an average blockpower that is no larger than $P$ when the blocklength is sufficiently large. By~\eqref{gamma10}, since $\delta, \epsilon>0$ can be chosen arbitrary small, and since $C^{\textnormal{(Lin)}}$ is continuous in the power $P$ (Remark~\ref{rem:region} ahead) and  is defined as a {supremum}, the result in Proposition~\ref{the1} will follow.

We describe our scheme for  blocklength-$(n+2K)$, for some fixed  $n\in\mathbb{Z}^+$. Our scheme is based on the  parameters $ \mat{A}_1^{(n)}, \ldots, \mat{A}_K^{(n)}$ in~\eqref{eq:tuplesseq}, on the vectors $\vect{v}_1^{(n)}, \ldots, \vect{v}_K^{(n)}$, and on the indices $j_1^{(n)}, \ldots, j_K^{(n)}$. For ease of notation, when describing our scheme in the following, we drop the superscript $(n)$, i.e., we write
\begin{equation*}
 \mat{A}_1, \ldots, \mat{A}_K, \; \vect{v}_1, \ldots, \vect{v}_K, \; \textnormal{ and } \; j_1, \ldots, j_K.
\end{equation*}
We also assume that
\begin{equation}
j_1\leq j_2\leq  \ldots\leq  j_K.
\end{equation}
(If this is not the case, we simply relabel the receivers.)
Also, to further simplify the description of the linear-feedback coding and the decoding, we rename the $n+2K$ transmission slots as depicted in Figure~\ref{fig:slots}.
Transmission starts at slot $1-K$ and ends at slot $n$; also, after each slot $j_k$, for $k\in\set{K}$, we introduce an additional slot $\tilde{j}_k$. We call the slots $1-K, \ldots, 0$ the \emph{initialization slots}, the slots $\tilde{j}_1,\ldots,\tilde{j}_K$ the \emph{extra slots}, and the remaining slots $1,2,3,\ldots, n$  the \emph{regular slots}.
\begin{figure*}[!t]
\centering
\includegraphics[width=0.95\textwidth]{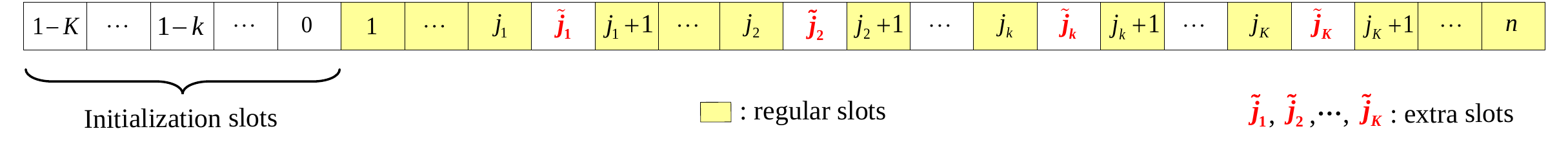}
\caption{Labeling of the transmission slots for our blocklength-$(n+2K)$ scheme.} \label{fig:slots}
\end{figure*}


In our scheme, the message points $\{ \Theta_k\}_{k=1}^K$ are constructed as in the Ozarow-Leung scheme \cite{ozar}:
\begin{equation}\label{eq:Theta1}
 \Theta_k := 1/2 - \frac{M_k-1}{\lfloor e^{(n+2K) R_k} \rfloor}, \qquad k\in\set{K}.
\end{equation}
These messages are sent during the initialization phase. Specifically, in the initialization slots $i=1-K,\ldots, 0$, the transmitter sends the $K$ message points $\Theta_1,\ldots, \Theta_K$:
\begin{equation}\label{eq:Xinit}
X_{1-k}= \sqrt{\frac{P}{\Var{\Theta_k}}} \Theta_{k}, \qquad k\in\set{K}.
\end{equation}
In the regular slots $i=1,\ldots,n$, the transmitter sends the same inputs as in the scheme with common message described by the parameters in~\eqref{eq:tuplesseq}, but without the component from the message point and  where for each $k\in\set{K}$ the noise sample $Z_{k,j_k}$ is replaced by $Z_{k,\tilde{j}_k}$. Thus, defining the $n$-length vector of regular inputs $\vect{X}\triangleq \trans{(X_1, X_2, X_3, \ldots, X_n)}$, we have
\begin{equation}\label{eq:inputsn}
\vect{X}= \sum_{k=1}^{K} \mat{A}_k \vect{\tilde{Z}}_k
\end{equation}
where for $k\in\set{K}$,
\begin{equation}
\vect{\tilde{Z}}_k := \trans{(Z_{k,1}, Z_{k,2}, \ldots, Z_{k,j_k-1}, Z_{k, \tilde{j}_k},  Z_{k,j_k+1}, \ldots, Z_{k,n})}
\end{equation}
denotes the $n$-length noise vector experienced at Receiver~$k$ during the regular slots $1,\ldots, j_k-1$,  the extra slot $\tilde{j}_k$, and  the regular slots $j_k+1,\ldots, n$.

Since for each $k\in\set{K}$, the extra slot $\tilde{j}_k$ preceds all regular slots $j_k+1, \ldots, n$, the strict lower-triangularity of the matrices $\mat{A}_1, \ldots, \mat{A}_K$ ensures that in~\eqref{eq:inputsn} the feedback is used in a strictly causal way.

In each extra slot $\tilde{j}_k$, for $k\in\set{K}$, the transmitter sends the regular input $X_{j_k}$, but now with the  noise sample $Z_{k, 1-k}$,
\begin{equation}\label{eq:Xbis}
X_{\tilde{j}_k}= X_{j_k} + Z_{k,1-k}.
\end{equation}
 The noise sample $Z_{k,1-k}$ is of interest to Receiver~$k$ (and only to Receiver~$k$) because from this noise sample and $Y_{k,1-k}$ one can recover $\Theta_k$, see~\eqref{eq:Xinit}. Therefore---as described shortly---in the decoding, Receiver~$k$ considers the extra output $Y_{k,\tilde{j}_{k}}$ which contains $Z_{k, 1-k}$  whereas all other receivers $k'\neq k$ instead consider  the regular outputs $Y_{k', j_k}$ which do not have the $Z_{k,1-k}$-component.

\begin{figure*}[!t]
\centering
\includegraphics[width=0.95\textwidth]{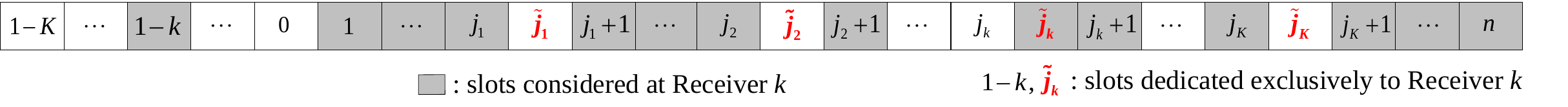}
\caption{Transmissions considered at Receiver~$k$ and transmissions dedicated exclusively to Receiver~$k$.} \label{fig:slots_Receiverk}
\end{figure*}
The decoding is similar as in the Ozarow-Leung scheme. However, here, each Receiver~$k\in\set{K}$ only considers the initialization output $Y_{k,1-k}$, the regular outputs $Y_{k,1},\ldots, Y_{k,j_k-1}, Y_{k,j_k+1},\ldots, Y_{k,K}$ and the extra output $Y_{k,\tilde{j_k}}$, see also Figure~\ref{fig:slots_Receiverk}. Specifically, Receiver~$k$ forms the $n$-length vector
\begin{IEEEeqnarray}{rCl}\label{eq:vecY}
\vect{\tilde{Y}}_k &:=& \trans{\big(Y_{k,1}, \ldots, Y_{k,j_k-1}, Y_{k,\tilde{j}_k}, Y_{k, j_k+1}, \ldots, Y_{k,n} \big) },\IEEEeqnarraynumspace
\end{IEEEeqnarray}
and produces the LMMSE estimate $\hat{Z}_{k,1-k}$ of the noise $Z_{k,1-k}$ based on the vector $\vect{\tilde{Y}}_k$. It then forms
\begin{IEEEeqnarray}{rCl}
\hat{\Theta}_k &=&  \sqrt{\frac{\mathrm{Var}(\Theta_k)}{P}}\left( Y_{k, 1-k} - \hat{Z}_{k,1-k}\right).
\end{IEEEeqnarray}
and performs nearest neighbor decoding to decode its desired Message $M_k$ based on $\hat{\Theta}_k$.

We now analyze the described scheme.
The expected blockpower of our scheme is:
\begin{IEEEeqnarray}{rCl}
\lefteqn{\sum_{i=1-K}^{0}\E{| X_i|^2}  + \sum_{i=1}^{n}\E{| X_i|^2}  + \sum_{k=1}^K \E{\big|X_{\tilde{j}_k}\big|} }\quad \nonumber \\
  & \leq & K P + n(P-\delta ) +\sum_{k=1}^K \E{|X_{j_k}|^2} +\sum_{k=1}^{K} \sigma_k^2\IEEEeqnarraynumspace
\end{IEEEeqnarray}
where the  inequality follows from~\eqref{eq:Xinit},  \eqref{eq:inputsn}, and \eqref{eq:Xbis}, and from~\eqref{eq:dd}, which assures that the regular inputs $X_{1}, \ldots, X_n$ are block-power constrained to $n(P-\delta)$.  Further, since  the  indices $j_1, \ldots, j_K$ satisfy Assumption~\eqref{Lemma1A},
\begin{equation}
\lim_{n\to \infty} \frac{1}{n} \sum_{k=1}^K \E{|X_{j_k}|^2}=0,
\end{equation}
and thus for sufficiently large $n$ the proposed scheme for independent messages is average blockpower constrained to $P$.

We analyze the probability of error. Notice that
\begin{IEEEeqnarray}{rCl}
\hat{\Theta}_k & = & \Theta_k+ E_{k}
\end{IEEEeqnarray}
where
\begin{equation}
E_k:= \sqrt{\frac{\mathrm{Var}({\Theta}_k)}{P}}\left( {Z}_{k,1-k} - \hat{Z}_{k,1-k}\right)
\end{equation}
is zero-mean Gaussian of variance
\begin{equation}\label{eq:Var1}
\Var{E_k}= \frac{\Var{\Theta_k}}{P} \sigma_k^2 e^{-2I\left(Z_{k,1-k}; \vect{\tilde Y}_k\right)}.
\end{equation}
 Equation~\eqref{eq:Var1} is justified by
 \begin{IEEEeqnarray}{rCl}
I\big(Z_{k,1-k}&; &\vect{\tilde Y}_k\big) \nonumber \\
& = & h( {Z}_{k,1-k} ) - h\left(Z_{k,1-k}\big| \vect{\tilde Y}_{k}\right) \nonumber\\
& = & \frac{1}{2} \log \left( \frac{ \sigma_k^2 }{ \Var{ {Z}_{k,1-k} - \hat{Z}_{k,1-k}}} \right) \label{eq:I1}
\end{IEEEeqnarray}  where the last equality follows because $Z_{k,1-k}$ and $\vect{\tilde{Y}}_k$ are jointly Gaussian, and thus the LMMSE estimation error $Z_{k,1-k}-\hat{Z}_{k,1-k}$ is independent of the observations $\vect{\tilde Y}_k$.

The nearest neighbor decoding rule is successful if $|E_{k}|$ is smaller than half the distance between any two message points. Since $E_{k}$ is Gaussian and independent of the message point, the probability of this happening is
\begin{IEEEeqnarray*}{rCl}
\Prv{\hat{M}_k\neq M_k}&\leq& \Prv{|E_k |\geq \frac{1}{2\cdot \lfloor e^{(n+2K)R_k}\rfloor }}\\
& =&2\set{Q} \left(\frac{e^{I(Z_{k, 1-k}; \vect{\tilde{Y}}_k)}}{2\cdot \lfloor e^{(n+2K)R_k}\rfloor}\cdot\frac{P}{\Var{{\Theta}_k}\sigma_k^2}\right).\IEEEeqnarraynumspace
\end{IEEEeqnarray*}
We conclude that the probability of error tends to 0, double-exponentially, whenever
\begin{IEEEeqnarray}{rCl}\label{rate1}
  R_k&<& \varliminf_{n \to \infty} \frac{1}{n}I(Z_{k,1-k}; \vect{\tilde{Y}}_k).
  \end{IEEEeqnarray}
  Notice that the vector $\vect{\tilde{Y}}_k$ as defined in~\eqref{eq:vecY}, satisfies
  \begin{IEEEeqnarray}{rCl}
\vect{\tilde{Y}}_k &=&
 \sum_{k'\in\set{K}\backslash\{k\}} \mat{A}_{k'} \vect{\tilde Z}_{k'} + (\mat{I} + \mat{A}_k) \vect{\tilde{Z}}_k + \vect{e}_{j_k} Z_{k, 1-k}\label{eq:Yuseful1} \IEEEeqnarraynumspace
\end{IEEEeqnarray}
where for each $i\in\{1,\ldots, n\}$ the vector $\vect{e}_i$ is the $n$-length unit-norm vector with all zero entries except at position $i$ where the entry is $1$. Thus, by the data processing inequality,
 \begin{IEEEeqnarray}{rCl}
\lefteqn{I(Z_{k,1-k}; \vect{\tilde{Y}}_k)}\nonumber\\ & \geq &I(Z_{k,1-k}; \trans{\vect{v}}_k \vect{\tilde{Y}}_k)\nonumber\\
 &=& \frac{1}{2}\log \left(1\!+\!\frac{|{v}_{k,j_k}|^2}{\sigma_k^2\|\vect{v}_k(\mat{I}+\mat{A}_k)\|^2+\!\!\!\!\sum\limits_{k'\in\set{K}\backslash \{k\}} \!\!\! \!\!\sigma_{k'}^2\|\vect{v}_{k'}\mat{A}_{k'}\|^2}\right) \nonumber\\
 &=& \frac{1}{2}\log \left(1+\frac{|{v}_{k,j_k}|^2}{c_k}\right)\label{eq:noli}
  \end{IEEEeqnarray}
where the first equality follows by~\eqref{eq:Yuseful1} and the joint Gaussianity of all involved random variables and the second equality follows by the definition of $c_k$ in~\eqref{eq:cu}.

Combining~\eqref{rate1} and \eqref{eq:noli}, we obtain that the probability $\Prv{\hat{M}_k\neq M_k}$  tends to 0 as $n\to \infty$ whenever
  \begin{IEEEeqnarray}{rCl}\label{rate11}
  R_k
  &<& \varliminf_{n \to \infty} \frac{1}{2n} \log\bigg(1+\frac{|{v}_{k,j_k}|^2}{c_k}
\bigg). \end{IEEEeqnarray}
(Recall that the quantities $j_k,$ $c_k$,  and $v_{k,j_k}$ depend on $n$, but here we do not show this dependence for readability.)

Further, by the converse in~\eqref{gamma10},
  \begin{IEEEeqnarray}{rCl}
  0< R &\leq & \varliminf_{n \to \infty} \frac{-1}{2n} \log c_k \nonumber\\
      &= &\varliminf_{n \to \infty} \frac{1}{2n} \log\frac{|{v}_{k,j_k}|^2}{c_k}\label{eq:pos}\\
            &= &\varliminf_{n \to \infty} \frac{1}{2n} \log\left(1+\frac{|{v}_{k,j_k}|^2}{c_k}\right) \label{rate111}
\end{IEEEeqnarray}
where the first equality  holds by Condition~\eqref{Lemma1C} and the second equality holds because~\eqref{eq:pos} implies that the ratio $\frac{|{v}_{k,j_k}|^2}{c_k}$ tends to infinity with $n$.

Combining~\eqref{rate11} with~\eqref{rate111} establishes that for arbitrary $\epsilon>0$ there exists a rate tuple $(R_1,\ldots, R_K)$ satisfying \eqref{eq:rate} such that  the described scheme with independent messages  achieves probability of error that tends to 0 as the blocklength tends to infinity. 

\begin{remark}\label{rem:region}
In the spirit of the scheme for private messages described above, one can construct a linear-feedback scheme with a common message point that has arbitrary small probability of error whenever
\begin{equation*}
R< \varlimsup_{n\to \infty} -\frac{1}{2n} \log c_k, \qquad k\in\set{K}.
\end{equation*}
Combined with the converse in~\eqref{gamma10}, this gives a (multi-letter) characterization of $C^{\textnormal{(Lin)}}$. Based on this multi-letter characterization one can show the continuity of $C^{\textnormal{(Lin)}}$ in the transmit-power constraint $P$.
\end{remark}
%

\section{Proof of Theorem~\ref{thm:2}: Coding Scheme achieving $L$-th order exponential decay}\label{sec:theorem2}

The scheme is based on the scheme in~\cite{reza}, see also~\cite{stark}.
Fix a  positive rate $R<C$ and a positive integer $L$. Assume that
\begin{equation}\label{eq:fb}
R_\textnormal{fb}\geq R(L-1).
\end{equation}
Also, fix a large blocklength $n$ and small numbers $\epsilon, \delta > 0$  such that
\begin{equation}\label{eq:RC}
 R<C(1-\delta)
\end{equation}
and
\begin{IEEEeqnarray}{rCl}
(1-\epsilon)^{-1} &<&1+\delta  \label{eq:epsdelta} .
\end{IEEEeqnarray}
Define
\begin{equation}\label{eq:n2}
n_1:=(1-\epsilon) n
\end{equation}
 and for $l\in\{2,\ldots,L\}$
 \begin{equation}\label{eq:n1}
 n_l:=n_1+\frac{\epsilon n}{L-1}(l-1).
 \end{equation}
Notice that by~\eqref{eq:epsdelta} and \eqref{eq:n2},
\begin{equation}\label{eq:delta}
\frac{n}{n_1}  <1+\delta.
\end{equation}
The coding scheme takes place in $L$ phases. After each phase $l\in\{1,\ldots,L\}$, each Receiver $k\in\set{K}$ makes a temporary guess $\hat{M}^{(k)}_l$ of message $M$.
The final guess is the guess after phase~$L$:
\begin{IEEEeqnarray}{rCl}
 \hat{M}^{(k)}=\hat{M}_L^{(k)},
\end{IEEEeqnarray}
Define the probability of error after phase $l\in\{1,\ldots,L\}$:
\begin{IEEEeqnarray}{rCl}\label{eq:Pel}
  P^{(n)}_{e,l} :=\Prv{\bigcup_{k\in\set{K}} \hat{M}^{(k)}_l\neq M}
\end{IEEEeqnarray}
and thus
\begin{equation}\label{eq:toterror}
  P^{(n)}_{e}=P^{(n)}_{e,L}.
\end{equation}


\subsection{Code Construction}
We construct a codebook $\mathcal{C}_1$ that
\begin{itemize}
\item is of blocklength $n_1$,
\item is of rate $R_{\text{phase},1}=\frac{n}{n_1}R$,
\item satisfies an expected average block-power constraint $P$, and
\item when used to send a common message over the Gaussian BC in~\eqref{eq:BCk} and combined with an optimal decoding rule,  it achieves probability of error $\rho_1$ not exceeding
\begin{IEEEeqnarray}{rCl}\label{eq:rho1}
\rho_1\leq e^{ -n(\zeta -o(1))}
\end{IEEEeqnarray}
for some  $\zeta>0$. 
\end{itemize}Notice that such a code exists because, by \eqref{eq:RC} and \eqref{eq:delta}, the rate of the code $\frac{n}{n_1}R< C (1-\delta^2)$, and because the error exponent of the BC with common message without feedback is positive for all rates below capacity.\footnote{The positiveness of the error exponent for the Gaussian BC with common message and without feedback follows from the fact that without feedback the probability of error for the Gaussian BC with common messages is at most $K$ times the probability of error to the weakest receiver.}

Let
\begin{equation}\label{gamma1}
\gamma_1:=\rho_1.
\end{equation}
For $l$  from 2 to $L$, do the following.

Construct a codebook $\mathcal{C}_l$ that:
\begin{itemize}
\item is of blocklength $\frac{\epsilon n}{L-1}-1$,
\item is of rate $R_{\text{phase},l}:=\frac{R(L-1)}{\epsilon-(L-1)/n}$,
\item satisfies an expected average block-power constraint $P/{\gamma_{l-1}}$,
\item when used to send a common message over the Gaussian BC in~\eqref{eq:BCk} and combined with an optimal decoding rule,  it achieves probability of error $\rho_l$ not exceeding
\begin{equation}\label{eq:rho}
\rho_l\leq \textnormal{exp}(-\underbrace{\textnormal{exp}\circ\ldots\circ\textnormal{exp}}_{l-1~\text{times}}(\Omega(n))).
\end{equation}
\end{itemize}
Define
\begin{IEEEeqnarray}{rCl}\label{eq:gamrho}
{\gamma_l}:= \rho_l+2\sum_{k\in\set{K}}\set{Q}\left( \frac{\sqrt{P/\gamma_{l-1}}}{2\sigma_k}\right).
\end{IEEEeqnarray}
(As shown in Section~\ref{subsec:analysis} ahead, $\gamma_l$  upper bounds   $P^{(n)}_{e,l}$ defined in~\eqref{eq:Pel}.)
By  \eqref{eq:rho} and~\eqref{eq:gamrho}, inductively one can show that
\begin{IEEEeqnarray}{rCl}\label{eq:expell}
{\gamma_l}\leq \textnormal{exp}(-\underbrace{\textnormal{exp}\circ\ldots\circ\textnormal{exp}}_{l-1~\text{times}}(\Omega(n))).
\end{IEEEeqnarray}


In Appendix~\ref{app:ex}, we prove that such codes $\set{C}_2, \ldots, \set{C}_L$ exist.

\subsection{Transmission}
Transmission takes place in $L$ phases.
\subsubsection{\textit{First phase with channel uses $i=1,\ldots, n_1$}} During the first $n_1$ channel uses, the transmitter sends the codeword in $\set{C}_1$ corresponding to message $M$.

After observing the channel outputs $Y^{n_1}_{k}$, Receiver~$k\in\set{K}$ makes a temporary decision $\hat{M}^{(k)}_{1}$ about $M$. It then sends this temporary decision $\hat{M}^{(k)}_{1}$  to the transmitter over the feedback channel:
\begin{equation}\label{eq:fbPha1}
V_{k, n_1} = \hat{M}^{(k)}_{1}.
\end{equation}
All previous feedback signals from Receiver~$k$ are deterministically 0. 

\subsubsection{ Phase $l\in\{2,\ldots,L\}$ with channel uses $i\in\{n_{l-1}\!+\!1, \ldots, n_l\}$}
The communication in phase $l$ depends on the receivers' temporary decisions $\hat{M}^{(1)}_{l-1}, \ldots, \hat{M}^{(K)}_{l-1}$ after the previous phase $(l-1)$. These decisions have been communicated to the transmitter over the respective feedback links.

If in phase $(l-1)$ at least one of the receivers made an incorrect decision,
\begin{equation}\label{eq:somewrong}
 (\hat{M}^{(k)}_{l-1} \neq M) , \qquad \textnormal{for some } k\in\set{K},
\end{equation}
then in channel use $n_{l-1}+1$ the transmitter sends an error signal to indicate an error:
\begin{IEEEeqnarray}{rCl}
X_{n_l+1} = \sqrt{P/{\gamma_{l-1}}}.
\end{IEEEeqnarray}
During the remaining channel uses $i=n_{l-1}+2, \ldots, n_l$ it then retransmits the message~$M$ by sending the codeword from $\set{C}_l$ that corresponds to $M$.

On the other hand, if all receivers' temporary decisions to the phase $(l-1)$ were correct,
\begin{equation}\label{eq:allcorrect}
 \hat{M}^{(1)}_{l-1}= \hat{M}^{(2)}_{l-1}=\ldots = \hat{M}^{(K)}_{l-1}= M,
\end{equation}
then the transmitter sends 0 during the entire phase $l$:
\begin{equation}
X_i= 0, \qquad i=n_{l-1}+1, \ldots, n_l.
\end{equation}
In this case, no power is consumed in phase $l$.

The receivers first detect whether the transmitter sent an error signal in channel use $n_{l-1}+1$. Depending on the output of this detection, they  either stick to their temporary decision in phase $(l-1)$ or  make a new decision based on the transmissions in phase $l$.  Specifically, if
\begin{equation}
Y_{k,n_{l-1}+1}< T_{l-1}
\end{equation}
 where
\begin{equation}\label{eq:GAMMA}
T_{l-1} := \frac{\sqrt{P/{\gamma_{l-1}}}}{2},
\end{equation}
then  Receiver~$k\in\set{K}$ decides that its decision $\hat{M}_{l-1}^{(k)}$ in phase $(l-1)$  was correct and keeps it as its temporary guess of the message $M$:
\begin{equation}
 \hat{M}_l^{(k)}= \hat{M}_{l-1}^{(k)}.
\end{equation}
 If instead,
 \begin{equation}
Y_{k,n_{l-1}+1}\geq T_{l-1},
\end{equation}
Receiver~$k$ decides that its temporary decision $\hat{M}_{l-1}^{(k)}$ was wrong and  discards it. It then produces a new guess $\hat{M}_{l}^{(k)}$ by decoding the code $\set{C}_l$ based on the outputs $Y_{k,n_{l-1}+2}, \ldots, Y_{k,n_l}$.

After each phase $l\in\{2,\ldots,L-1\}$, each Receiver $k\in\set{K}$ feeds back to the transmitter its temporary guess $\hat{M}_{l}^{(k)}$:
\begin{equation}\label{eq:fbPhal}
V_{k, n_l} = \hat{M}^{(k)}_{l}.
\end{equation}
All other feedback signals $V_{k, n_{l\!-\!1}\!+\!1}\!,\ldots,\!V_{k,n_l\!-\!1}$ in phase $l$  are deterministically 0. 

After $L$ transmission phases, Receiver $k$'s final guess is
\begin{equation}
 \hat{M}^{(k)}= \hat{M}_{L}^{(k)}.
\end{equation}
Thus, an error occurs in the communication if 
\begin{equation}
(\hat{M}^{(k)}_L\neq M),~~\text{for~some}~k\in\set{K}.
\end{equation}

\subsection{Analysis}\label{subsec:analysis}
In view of~\eqref{eq:fb}, by~\eqref{eq:fbPha1} and~\eqref{eq:fbPhal}, and because all other feedback signals are deterministically 0, our scheme satisfies the feedback rate  constraint in~\eqref{eq:fbconstra}.

We next analyze the probability of error and we bound the consumed power.
These analysis rely on the following events. For each $k\in\set{K}$ and $l\in\{1,\ldots,L\}$ define the events:
\begin{itemize}
\item $\set{E}^{(k)}_{l}$: Receiver~$k$'s decision in phase $l$ is wrong:
\begin{equation}
\hat{M}_{l}^{(k)} \neq M;
\end{equation}
\item $\set{E}^{(k)}_{T,l}$: Receiver~$k$ observes
\begin{equation}
Y_{k, n_{l}+1}<T_l;
\end{equation}
\item $\set{E}^{(k)}_{\rho,l}$: Decoding Message $M$ based on Receiver~$k$'s phase-$l$ outputs $Y_{k,n_{l-1}+2},\ldots,Y_{k,n_{l}}$ using codebook $\set{C}_{l}$ results in an error. 
\end{itemize}
Define also the events:
\begin{itemize}
\item[$\set{E}_{1,l}$:] All receivers' decisions in  phase $(l-1)$ are correct, and at least one Receiver $k\in\set{K}$ obtains an error signal in channel use $n_{l-1}+1$ :
\begin{equation} \label{eq:E1}
 \bigg(\bigcap_{k\in\set{K}} \big(\set{E}^{(k)}_{l-1}\big)^c\bigg)
\cap \bigg(
\bigcup_{k\in\set{K}} \big(\set{E}_{T,l-1}^{(k)}\big)^c\bigg).
 \end{equation}
\item[$\set{E}_{2,l}$:] At least one Receiver $k\in\set{K}$ makes an incorrect decision in phase $(l-1)$ but obtains no error signal in channel use $n_{l-1}+1$:
\begin{equation}\label{eq:E2}
\bigcup_{k\in\set{K}} \bigg( \set{E}^{(k)}_{l-1} \cap \set{E}^{(k)}_{T,l-1}\bigg).
\end{equation}
\item[$\set{E}_{3,l}$:] 
At least one Receiver~$k\in\set{K}$ makes an incorrect temporary decision in  phase $(l-1)$, and at least one Receiver~$k'\in\set{K}$ observes $Y_{k', n_{l-1}+1}\geq T_{l-1}$ and errs when decoding $M$ based on its phase-$l$ outputs $Y_{k',n_{l-1}+2},\ldots,Y_{k',n_{l}}$: 
\begin{equation}\label{eq:E3}
\bigg( \bigcup_{k\in \set{K}} \set{E}^{(k)}_{l-1} \bigg) \cap \bigg( \bigcup_{k'\in \set{K}} \Big(\big(\set{E}^{(k')}_{T,l} \big)^c \cap \set{E}^{(k')}_{\rho,l}\Big) \bigg).
\end{equation}
\end{itemize}

For each $l\in\{1,\ldots,L\}$, the probability $P_{e,l}^{(n)}$ is included in the union of the events $(\set{E}_{1,l}\cup \set{E}_{2,l}\cup \set{E}_{3,l})$, and thus, by the union bound,
\begin{IEEEeqnarray}{rCl}\label{eq:errsum}
  P_{e,l}^{(n)}  & \leq &
  \Prv{\set{E}_{1,l}} + \Prv{\set{E}_{2,l}} + \Prv{\set{E}_{3,l}}.
\end{IEEEeqnarray}
In particular, by~\eqref{eq:toterror} and \eqref{eq:errsum}, the probability of error of our scheme
\begin{IEEEeqnarray}{rCl}\
  P_e^{(n)} &\leq&
  \Prv{\set{E}_{1,L}} + \Prv{\set{E}_{2,L}} + \Prv{\set{E}_{3,L}}.\label{eq:Esum}
\end{IEEEeqnarray}


We bound each summand in~\eqref{eq:Esum} individually, starting with $\Prv{\set{E}_{1,L}}$.
By~\eqref{eq:E1}, we have
\begin{IEEEeqnarray}{rCl}
\Prv{\set{E}_{1,L}} & = & \Prv{ \bigg(\bigcap_{k\in\set{K}} \big(\set{E}^{(k)}_{L-1}\big)^c\bigg)
\cap \bigg(
\bigcup_{k\in\set{K}} \big(\set{E}_{T,L-1}^{(k)}\big)^c\bigg)} \nonumber \\
& \leq &   \Prv{
\bigcup_{k\in\set{K}} \big(\set{E}_{T,L-1}^{(k)}\big)^c \Big| \bigcap_{k\in\set{K}} \big(\set{E}^{(k)}_{L-1}\big)^c  }\nonumber\\
& \leq  &   \sum_{k=1}^K \Prv{  \big(\set{E}_{T,L-1}^{(k)}\big)^c   \big|\bigcap_{k\in\set{K}} \big(\set{E}^{(k)}_{L-1}\big)^c }\nonumber \\
& = & \sum_{k=1}^K \set{Q}\bigg(\frac{T_{L-1}}{\sigma_k}\bigg)\label{eq:E1bound}
\end{IEEEeqnarray}
where the first inequality follows by Bayes' rule and because a probability cannot exceed 1; the second inequality by the union bound; and the last equality because in the event $\big(\bigcap_{k\in\set{K}} (\set{E}^{(k)}_{L-1})^c\big)$, we have $X_{n_{L-1}+1}=0$ and thus  $Y_{k, n_{L-1}+1} \sim \mathcal{N}(0,\sigma^2_k)$.

Next, by~\eqref{eq:E2} and similar arguments as before, we obtain,
\begin{IEEEeqnarray}{rCl}
\Prv{\set{E}_{2,L}} &=& \Prv{ \bigcup_{k=\in \set{K}} \big(\set{E}^{(k)}_{L-1} \cap \set{E}^{(k)}_{T,L-1}\big)} \nonumber\\
& \leq & \sum_{k=1}^{K} \Prv{  \set{E}^{(k)}_{L-1}  \cap \set{E}^{(k)}_{T,L-1}}\nonumber \\
& \leq & \sum_{k=1}^{K} \Prv{  \set{E}^{(k)}_{T,L-1}\big| \set{E}^{(k)}_{L-1} }\nonumber \\
& = & \sum_{k=1}^K \set{Q}\bigg(\frac{T_{L-1}}{\sigma_k}\bigg). \label{eq:E2bound}
\end{IEEEeqnarray}
Finally, by~\eqref{eq:E3} and similar arguments as before,
\begin{IEEEeqnarray}{rCl}
\Prv{\set{E}_{3,L}} & = & \Prv{ \! \bigg( \bigcup_{k\in \set{K}} \set{E}^{(k)}_{L-1} \bigg) \cap \bigg( \bigcup_{k'\in \set{K}} \Big(\big(\set{E}^{(k')}_{T,L} \big)^c \cap \set{E}^{(k')}_{\rho,L}\Big) \bigg) } \nonumber \\
& \leq & \Prv{ \bigcup_{k'\in \set{K}} \Big(\big(\set{E}^{(k')}_{T,L} \big)^c \cap \set{E}^{(k')}_{\rho,L}\Big)
 \Big |  \bigcup_{k\in \set{K}} \set{E}^{(k)}_{L-1} } \nonumber \\
& \leq  & \Prv{ \bigcup_{k'\in \set{K}} \set{E}^{(k')}_{\rho,L}
 \Big |  \bigcup_{k\in \set{K}} \set{E}^{(k)}_{L-1} } \nonumber \\
&\leq& \rho_L \label{eq:E3bound}
\end{IEEEeqnarray}
where the last inequality follows by the definition of $\rho_L$.

In view of (\ref{eq:GAMMA}) and (\ref{eq:Esum})--
(\ref{eq:E3bound}),
\begin{IEEEeqnarray}{rCl}\label{eq:PeGa}
   P_e^{(n)} &\leq& \Prv{\set{E}_{1,L}} + \Prv{\set{E}_{2,L}} + \Prv{\set{E}_{3,L}} \nonumber\\
    &\leq& \rho_L+2\sum_{k\in\set{K}}\set{Q}\left( \frac{\sqrt{P/\gamma_{L-1}}}{2\sigma_k}\right)\nonumber\\
            &=& \gamma_L
\end{IEEEeqnarray}
where the equality follows by the definition of $\gamma_L$ in (\ref{eq:gamrho}).
Combining this with the $L$-th order exponential decay of $\gamma_L$, see~\eqref{eq:expell}, we obtain
 \begin{equation}
\varliminf_{n\to \infty} -\frac{1}{n} \underbrace{\log  \log \ldots \log}_{L-1 \,\textnormal{times}} (-\log  P_{e}^{(n)} )>0,
 \end{equation}

Now consider the consumed expected average block-power. Similarly to~\eqref{eq:PeGa}, we can show that for $l\in\{1,\ldots,L-1\}$,
\begin{IEEEeqnarray}{rCL}\label{eq:PeGaPh}
  P_{e,l}^{(n)} &\leq& 
           \gamma_l.
\end{IEEEeqnarray}
Since in each phase $l\in\{2,\ldots,L\}$ we consume power $P/\gamma_{l-1}$ in the event (\ref{eq:somewrong}) and power 0 in the event (\ref{eq:allcorrect}), by the definition in~\eqref{eq:Pel},
\begin{equation}
\frac{1}{n} \E{\sum_{i=1}^n X_i^2}\leq \frac{1}{n}\!\Big(P(1\!-\!\epsilon)n\!+\!\sum_{l=2}^{L}\!P^{(n)}_{e,l\!-\!1}\frac{P}{\gamma_{l-1}}\frac{\epsilon n}{L\!-\!1}\Big)\leq P
\end{equation}
where the second inequality follows from (\ref{eq:PeGaPh}).

This completes the proof of Theorem~\ref{thm:2}.



\appendices \label{prlema1}

\section{Proof of Proposition~\ref{prop:condnoi}}\label{app1}
We show that under assumption~\eqref{eq:condnoisevariance},
\begin{equation}
\lim_{K\to \infty} \alpha_1^\star =0,
\end{equation}
which implies~\eqref{eq:czero}.

Notice that~\eqref{eq:Requal} implies
for $k\in\{1\!,\ldots,\! K\!-\!1\}$:
\begin{eqnarray}
\frac{\alpha^*_K P}{N_K}=\frac{\alpha^*_k P}{(1-\alpha_1^\star-\alpha^*_2....-\alpha^*_k)P+N_k}.
\end{eqnarray}
Since for each $k$, the term $(1- \alpha_1^\star- \alpha^*_2- \ldots- \alpha^*_k)$ is nonnegative,
 \begin{eqnarray}
  \alpha^*_k& \geq	& \frac{N_k}{N_K}\alpha^*_K, \qquad k\in\{1,\ldots, K-1\}.
\end{eqnarray}
Thus, by~\eqref{eq:alphasum},
\[
1=  \sum_{k=1}^K \alpha^*_k \geq \sum_{k=1}^K \frac{N_k}{N_K} \alpha^*_K
\]
and
\[\alpha^*_K\leq \frac{N_K}{\sum_{k=1}^K N_k}.
\]
We conclude that, for every finite positive integer $K$,
\begin{eqnarray*}
R_K &\leq& \frac{1}{2}\log\bigg(1+\frac{P}{\sum_{k=1}^K N_k}\bigg),
\end{eqnarray*}
and under Assumption~\eqref{eq:condnoisevariance}, in the limit as $K\to \infty$,
\[
\lim_{K\to \infty} R_K =0.
\]

\section{Proof of Lemma~\ref{lem:vlog}}\label{app:fano}
We first prove the converse~\eqref{gamma10}.
Fix a blocklength $n$.
By Fano's inequality, for each $k\in\set{K}$,
\begin{IEEEeqnarray}{rCl} \label{fano1}
nR &=&H(M^{(n)}) \nonumber \\
     &\leq& I\left(M^{(n)}; Y_{k,1}^{(n)}, \ldots,  Y_{k,n}^{(n)}\right)+\epsilon(n)\nonumber\\
     &\leq &I\left(\Theta^{(n)}; Y_{k,1}^{(n)}, \ldots,  Y_{k,n}^{(n)}\right)+\epsilon(n)\nonumber\\
     &\stackrel{(a)} \leq &I\left(\bar{\Theta}^{(n)};\bar{Y}_{k,1}^{(n)}, \ldots, \bar Y_{k,n}^{(n)}\right)+\epsilon(n)
\end{IEEEeqnarray}
where $\frac{\epsilon(n)}{n} \to 0$ as $n\to \infty$ and where we defined the tuple $(\bar{\Theta}^{(n)}, \bar{Y}_{k,1}^{(n)}, \ldots, \bar Y_{k,n}^{(n)})$ to be jointly Gaussian with the same covariance matrix as the tuple $(\Theta^{(n)}; Y_{k,1}^{(n)}, \ldots,  Y_{k,n}^{(n)})$. Inequality (a) holds because the Gaussian distribution maximizes differential entropy under a covariance constraint.

Now, since $\bar{\Theta}^{(n)}, \bar{Y}_{k,1}^{(n)}, \ldots, \bar Y_{k,n}^{(n)}$ are jointly Gaussian, there exists a linear combination $\sum_{i=1}^n v_{k,i}^{(n)}\bar Y_{k,i}^{(n)}$ such that
\begin{equation}\label{eq:view}
I\left(\bar{\Theta}^{(n)}; \bar{Y}_{k,1}^{(n)}, \ldots, \bar Y_{k,n}^{(n)}\right)=I\bigg(\bar{\Theta}^{(n)};\sum_{i=1}^n v_{k,i}^{(n)} \bar Y_{k,i}^{(n)}\bigg).
\end{equation}
(In fact, the linear combination is simply the LMMSE-estimate of $\bar \Theta^{(n)}$ based on $\bar{Y}_{k,1}^{(n)}, \ldots, \bar Y_{k,n}^{(n)}$.) Defining the $n$-dimensional row-vector $\vect{v}_k^{(n)}= \big(v_{k,1}^{(n)}, \ldots, v_{k,n}^{(n)}\big)$,
in view of~\eqref{eq:view}, we have
\begin{IEEEeqnarray}{rCl}\label{eq:Ilog}
\lefteqn{
I\left(\bar{\Theta}^{(n)}; \bar{Y}_{k,1}^{(n)}, \ldots, \bar Y_{k,n}^{(n)}\right)} \quad \nonumber \\
& = & \frac{1}{2} \log\left( 1+ \frac{\big(\vect{v}_k^{(n)}\vect{d}^{(n)}\big)^2\Var{\Theta^{(n)}}}{c_k^{(n)}}
\right)
\end{IEEEeqnarray}
where $c_k^{(n)}$ is as defined in~\eqref{eq:cu}.

Notice that the right-hand side of~\eqref{eq:Ilog} does not depend on the norm of $\vect{v}_k^{(n)}$ (as long as it is non-zero) but only on the direction. Therefore, without loss of generality, we can assume that
\begin{equation}\label{eq:unitnormv}
\|\vect{v}_k^{(n)}\|^2=1.
\end{equation}
By~\eqref{fano1} and~\eqref{eq:Ilog}, we  conclude that for each $k\in\set{K}$, there exists a unit-norm vector $\vect{v}_k^{(n)}$ such that
\begin{IEEEeqnarray}{rCl}\label{gamma13}
R&\leq& \varliminf_{n \to \infty} \frac{1}{2n} \log\left(\! 1\!+\! \frac{\big(\vect{v}_k^{(n)}\vect{d}^{(n)}\big)^2\Var{\Theta^{(n)}}}{c_k^{(n)}}\!\right).
\end{IEEEeqnarray}
Since by assumption $R>0$, \eqref{gamma13} implies that the ratio  $(\vect{v}_k^{(n)}\vect{d}^{(n)})^2\Var{\Theta^{(n)}}/c_k^{(n)}$ tends to infinity and thus
\begin{IEEEeqnarray}{rCl}\label{gamma100}
R&\leq& \varliminf_{n \to \infty} \frac{1}{2n} \log\left(  \frac{\big(\vect{v}_k^{(n)}\vect{d}^{(n)}\big)^2\Var{\Theta^{(n)}}}{c_k^{(n)}}
\right).
\end{IEEEeqnarray}
 Now, consider the average block-power constraint~\eqref{eq:dd}. Since the trace of a positive semidefinite matrix is non-negative and $\Var{\Theta^{(n)}}\leq \E{ |\Theta^{(n)}|^2}$, by~\eqref{eq:dd}, for each $n\in\Integers^{+}$:
\begin{equation}\label{eq:dpower}
\|\vect{d}^{(n)}\|^2 \E{|\Theta^{(n)}|^2}\leq n(P-\delta).
\end{equation}
Since $\|\vect{v}_k^{(n)}\|=1$, \eqref{eq:unitnormv},   by the Cauchy-Schwarz Inequality,
\begin{equation}
\big(\vect{v}_k^{(n)}\vect{d}^{(n)}\big)^2\Var{\Theta^{(n)}} \leq n(P-\delta)
\end{equation}
and as a consequence
\begin{equation}\label{eq:numneg}
\varliminf_{n\to \infty}
 \frac{1}{2n} \log\left(\big(\vect{v}_k^{(n)}\vect{d}^{(n)}\big)^2\Var{\Theta^{(n)}}\right) \leq 0.
 \end{equation}
 Combining this with \eqref{gamma100}, proves the desired inequality~\eqref{gamma10}.

The proof of Inequalities~\eqref{Lemma1A} and \eqref{Lemma1C} relies on Lemmas~\ref{Lemma5} and \ref{Lemma3} at the end of this appendix.
Notice that the monotonicity of the $\log$-function and the nonnegativity of the norm combined with \eqref{gamma10} imply that for each $k\in\set{K}$,
 \begin{IEEEeqnarray}{rCl}
R & \leq &  \varliminf_{n\to \infty} - \frac{1}{2n} \log \big\|\vect{v}_k^{(n)} \big(\mat{I}+\mat{A}_k^{(n)}\big)\big\|^2, \label{eq:Gaminequa1}
 \end{IEEEeqnarray}
 where recall that we assumed $R>0$.

Define  for each $k\in\set{K}$ and positive integer $n$ the set
\begin{IEEEeqnarray}{rCl}\label{eq:S1}
\set{S}_k^{(n)} &: = &\left\{ i \in\{1,\ldots, n\} \colon v_{k,i}^{(n)} > {n^{-2\log n}}\right\}.
\end{IEEEeqnarray}
By Lemma~\ref{Lemma5} and Inequality~\eqref{eq:Gaminequa1}, the cardinality of  each set $\set{S}_k^{(n)}$ is unbounded,
\begin{IEEEeqnarray}{rCl}
|\set{S}_k^{(n)}| \to \infty \qquad \textnormal{as} \qquad n\to \infty, \qquad k\in\set{K}.
\end{IEEEeqnarray}
Applying now Lemma~\ref{Lemma3}  to $p=P-\delta$, to
\begin{equation}
\pi_{i}^{(n)} = \E{\left(X_{i}^{(n)}\right)^2},
\end{equation}
and to $\set{T}^{(n)}= \set{S}_k^{(n)}$ implies that for each $k\in\set{K}$ there exists a sequences of indices $\{j_k^{(n)}\in\set{S}_k^{(n)}\}_{n=1}^{\infty}$ that satisfies \eqref{Lemma1A}. Since every sequence of indices $\{i^{(n)}  \in \set{S}_k^{(n)}\}_{ n=1}^{\infty}$ also satisfies \eqref{Lemma1C}, this concludes the proof of the lemma.

 \begin{Lemma}\label{Lemma5}
 For each $n\in \Integers^+$, let $\mat{A}^{(n)}$ be a strictly lower-triangular $n$-by-$n$ matrix and  $\vect{v}^{(n)}$  an $n$-dimensional row-vector. Let $a_{i,j}^{(n)}$ denote the row-$i$, column-$j$ entry of $\mat{A}^{(n)}$ and $v_{i}^{(n)}$ denote the $i$-th entry of $\vect{v}^{(n)}$.
  Assume that the elements $a_{i,j}^{(n)}$ are bounded as
 \begin{equation}\label{eq:powerc}
| a_{i,j}^{(n)}|^2 \leq n p
 \end{equation}
 for some  real number $p>0$,
 and that the inequality
 \begin{equation}\label{eq:Gamlim}
\varliminf_{n\to \infty} - \frac{1}{2n}\log \|\vect{v}^{(n)} (\mat{I}+ \mat{A}^{(n)})\|^2\geq\Gamma
 \end{equation}
holds for some real number $\Gamma>0$.
Then, for each $\epsilon\in (0,\Gamma)$ and for all sufficiently large $n$ the following implication holds: If
\begin{subequations}\label{eq:impli}
 \begin{equation}\label{eq:v1j}
|v^{(n)}_{j}|>e^{-n(\Gamma-\epsilon)}
 \end{equation}
 for some index $j\in\{1,\ldots, n\}$,
then there must exist an index $i\in\{j+1,\ldots, n\}$ such that
  \begin{equation}\label{eq:lempr}
  |v^{(n)}_{i}|\geq \frac{|v^{(n)}_{j}|-e^{-n(\Gamma-\epsilon)}}{n^{\frac{3}{2}}\sqrt{p}}.
    \end{equation}
  \end{subequations}

If moreover, the vectors $\{\vect{v}^{(n)}\}_{n=1}^{\infty}$ are of unit norm, then
the cardinality of the set
\begin{equation}\label{eq:setn}
\mathcal{S}^{(n)}:=\left\{j \in \{1,\ldots, n\}:\, |v_{j}^{(n)}|>n^{-2\log(n)}\right\}
\end{equation}
is unbounded in $n$.
\end{Lemma}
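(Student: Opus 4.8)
The plan is to treat the two assertions separately: the single-step implication \eqref{eq:impli} will follow from a direct entrywise expansion of $\vect{v}^{(n)}(\mat{I}+\mat{A}^{(n)})$, while the unboundedness of $|\mathcal{S}^{(n)}|$ will follow by iterating this implication to build a long increasing chain of indices on which $|v^{(n)}_\cdot|$ stays above the threshold. Throughout I drop the superscript $(n)$ for readability. For the first part I would write $\mat{B}=\mat{I}+\mat{A}$, which is lower-triangular with unit diagonal because $\mat{A}$ is strictly lower-triangular, and set $\vect{w}=\vect{v}\mat{B}$. Reading off the $j$-th component of this row vector gives
\[
w_j=v_j+\sum_{i=j+1}^n v_i\,a_{i,j},
\]
since in column $j$ only the diagonal entry and the strictly-lower entries contribute. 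The hypothesis \eqref{eq:Gamlim} means, by definition of $\varliminf$, that for every $\eta>0$ and all large $n$ one has $\|\vect{w}\|^2\le e^{-2n(\Gamma-\eta)}$; taking $\eta=\epsilon$ yields $|w_j|\le\|\vect{w}\|\le e^{-n(\Gamma-\epsilon)}$ for all large $n$.

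I would then rearrange the displayed identity as $\sum_{i>j}v_i a_{i,j}=v_j-w_j$ and take absolute values, so that under assumption \eqref{eq:v1j},
\[
\sum_{i=j+1}^n |v_i|\,|a_{i,j}|\ \ge\ |v_j|-|w_j|\ \ge\ |v_j|-e^{-n(\Gamma-\epsilon)}.
\]
Bounding each $|a_{i,j}|\le\sqrt{np}$ by \eqref{eq:powerc} and using that the sum has fewer than $n$ terms, the left-hand side is at most $n^{3/2}\sqrt{p}\,\max_{i>j}|v_i|$; dividing produces an index $i\in\{j+1,\ldots,n\}$ with $|v_i|\ge\big(|v_j|-e^{-n(\Gamma-\epsilon)}\big)/(n^{3/2}\sqrt{p})$, which is exactly \eqref{eq:lempr}.

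For the unit-norm statement I would iterate this implication. Since $\|\vect{v}\|^2=1$, the largest component gives a starting index $j_1$ with $|v_{j_1}|\ge 1/\sqrt n$. The crucial observation is a separation of scales: the threshold $n^{-2\log n}=e^{-2(\log n)^2}$ decays only sub-exponentially, so for large $n$ it dominates the exponentially small quantity $e^{-n(\Gamma-\epsilon)}$. Consequently, whenever the current index satisfies $|v_j|>n^{-2\log n}$, the subtracted term in \eqref{eq:lempr} is at most half of $|v_j|$, and the implication produces a \emph{strictly larger} index with $|v_i|\ge |v_j|/(2n^{3/2}\sqrt p)$. Iterating from $j_1$, after $t$ steps the value is at least $n^{-1/2}(2n^{3/2}\sqrt p)^{-t}$, and an elementary computation shows this remains above $n^{-2\log n}$ as long as $t$ does not exceed a quantity that grows like $\tfrac{4}{3}\log n$. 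This yields a strictly increasing chain of indices, all lying in $\mathcal{S}^{(n)}$, of length growing like $\log n$; hence $|\mathcal{S}^{(n)}|\to\infty$.

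The entrywise expansion and the norm estimates are routine; the step that requires the most care is the final counting argument, where one must verify that the per-step geometric loss factor $2n^{3/2}\sqrt p$ is only \emph{polynomial} in $n$, whereas the admissible dynamic range---from $1/\sqrt n$ down to $n^{-2\log n}$---is \emph{super-polynomial}, so that the number of sustainable steps genuinely diverges. One should also confirm that the exponentially small additive error $e^{-n(\Gamma-\epsilon)}$ never interferes with the geometric decay, which is precisely what the scale separation above guarantees.
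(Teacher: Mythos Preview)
Your proposal is correct and follows essentially the same route as the paper: the entrywise expansion of $\vect{v}(\mat{I}+\mat{A})$ together with $|w_j|\le\|\vect{w}\|\le e^{-n(\Gamma-\epsilon)}$ and $|a_{i,j}|\le\sqrt{np}$ is exactly how the paper derives \eqref{eq:lempr}, and the second part is the same iteration starting from $|v_{j_1}|\ge 1/\sqrt{n}$ to produce an increasing chain of roughly $\log n$ indices in $\mathcal{S}^{(n)}$. Your ``factor-of-two'' absorption of the additive term $e^{-n(\Gamma-\epsilon)}$ is a mild streamlining of the paper's explicit geometric-series bookkeeping, but the idea and the resulting $\log n$ count are identical.
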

\begin{proof}
Fix $\epsilon\in(0,\Gamma)$ and let $n$ be sufficiently large so that
  \begin{equation}\label{eq:Gammaeps}
  -\frac{1}{2n}\log\big\|\vect{v}^{(n)} (\mat I+\mat{A}^{(n)})\big\|^2\geq \Gamma-\epsilon.
  \end{equation}
  This is possible by~\eqref{eq:Gamlim}.

Since $\mat{A}^{(n)}$ is strictly lower-triangular,
  \[
  \begin{array}{ll}
    \|\vect{v}^{(n)}(\mat I+ \mat{A}^{(n)})\|^2
    \! \!&=\sum\limits_{j=1}^{n}({v}^{(n)}_{j}+\sum\limits_{i=j+1}^{n}{v}^{(n)}_{i}a_{i,j}^{(n)})^2\\
    & \geq \sum\limits_{j=1}^{n}\big(|{v}^{(n)}_{j}|-\big|\sum\limits_{i=j+1}^{n}{v}^{(n)}_{i}a_{i,j}^{(n)}\big|\big)^2\\
    &\geq \big(|{v}^{(n)}_{j}|-\big|\sum\limits_{i=j+1}^{n}{v}^{(n)}_{i}a_{i,j}^{(n)}\big|\big)^2
  \end{array}
  \]
and  by~\eqref{eq:Gammaeps} and the monotonicity of the $\log$-function,   for 
all $j\in\{1,\ldots, n\}$:
  \[
    -\frac{1}{2n}\log\bigg(|{v}^{(n)}_{j}|-\big|\sum\limits_{i=j+1}^{n}v^{(n)}_{i}a_{i,j}^{(n)}\big|\bigg)^2\geq \Gamma-\epsilon.
  \]
  Thus,
  \[
  |v^{(n)}_{j}|\leq \big|\sum\limits_{i=j+1}^{n}v^{(n)}_{i}a_{i,j}^{(n)}\big|+ e^{-n(\Gamma-\epsilon)}
  \]
  and by~\eqref{eq:powerc}:
  \begin{IEEEeqnarray}{rCl}
  |v^{(n)}_{j}|-e^{-n(\Gamma-\epsilon)} & \leq & \big|\sum\limits_{i=j+1}^{n}v^{(n)}_{i}a_{i,j}^{(n)}\big| \nonumber \\
  & \leq& \sum\limits_{i=j+1}^{n}|v^{(n)}_{i}|\sqrt{np}\label{eq:d}.
  \end{IEEEeqnarray}
  If   $|v^{(n)}_{j}|\leq e^{-n(\Gamma-\epsilon)}$, then the sum on the right-hand side of~\eqref{eq:d} can be empty, i.e., $j=n$. However, if
  \begin{equation}\label{eq:vjn}
  |v^{(n)}_{j}|>e^{-n(\Gamma-\epsilon)},
  \end{equation}
  then the sum needs to have at least one term. Indeed, if  (\ref{eq:vjn}) holds and $i<n$, there must exist an index $i \in\{j+1,...,n\}$ such that
  \begin{equation}
  \frac{1}{n}\Big(|v^{(n)}_{j}|-e^{-n(\Gamma-\epsilon)}\Big)\leq |v^{(n)}_{i}|\sqrt{np},
  \end{equation}
  which is equivalent to the desired bound~\eqref{eq:lempr}.

We now prove the second part of the lemma, i.e., the unboundedness of the cardinalities of the sets $\set{S}^{(n)}$, where we assume that the vectors $\{\vect{v}^{(n)}\}$ are of unit norm.  In the following, let $n$ be sufficiently large so that the first part of the lemma, Implication~\eqref{eq:impli}, holds and so that
\begin{equation}\label{eq:nsuff0}
\frac{1}{\sqrt{n}} > \frac{1}{n^{2\log(n)}} >e^{-n(\Gamma-\epsilon)}
\end{equation}
and
for every $\ell\in\{1,\ldots, \log(n)\}$
  \begin{IEEEeqnarray}{rCl}
 \lefteqn{  \frac{1}{n^{(3\ell+1)/2}p^{\ell/2}} - e^{-n(\Gamma-\epsilon)} n^{-3/2}p^{-1/2} \frac{1- n^{-3 \ell/2} p^{-\ell/2}}{1- n^{-3/2}p^{-1/2}} } \quad \nonumber \\
  &>& \frac{1}{n^{2\log(n)}} \hspace{6cm}\label{eq:nlogsuff}
  \end{IEEEeqnarray}

Since $\|\vect{v}^{(n)}\|^2=1$, for each $n$,  there must exist an index $i_0^{(n)}\in\{1,\ldots, n\}$  such that
\begin{equation}\label{eq:average}
|v^{(n)}_{i_0^{(n)}}| \geq \frac{1}{\sqrt{n}},
\end{equation}
and  by \eqref{eq:nsuff0}
\begin{equation}
|v^{(n)}_{i_0^{(n)}}|> n^{2\log(n)} >e^{-n(\Gamma-\epsilon)}.
\end{equation}
We conclude by~\eqref{eq:impli} that there exists an index $i_1^{(n)}\in \{i_0^{(n)}+1, \ldots, n\}$ satisfying
  \begin{IEEEeqnarray}{rCl}
    |v^{(n)}_{i_{1}^{(n)}}|&\geq& \frac{|v^{(n)}_{i_0^{(n)}}|-e^{-n(\Gamma-\epsilon)}}{n^{\frac{3}{2}}\sqrt{p}}\nonumber \\
    & \geq & \frac{1}{n^2 \sqrt{p}}-\frac{e^{-n(\Gamma-\epsilon)}}{n^{\frac{3}{2}}\sqrt{p}}\label{eq:vi1}
  \end{IEEEeqnarray}
  where the  inequality follows from \eqref{eq:average}.
  By~\eqref{eq:nsuff0} and~\eqref{eq:nlogsuff}, (applied for $\ell=1$), Inequality~\eqref{eq:vi1} implies that
   \begin{equation}
|v^{(n)}_{i_1^{(n)}}|>e^{-n(\Gamma-\epsilon)},
\end{equation}
and consequently, by~\eqref{eq:impli}, there exists an index $i_2^{(n)}\in \{i_1^{(n)}+1, \ldots, n\}$ satisfying
  \begin{IEEEeqnarray}{rCl}\label{eq:vi2}
    |v^{(n)}_{i_2^{(n)}}|& \geq &\frac{|v^{(n)}_{i_1^{(n)}}|-e^{-n(\Gamma-\epsilon)}}{n^{\frac{3}{2}}\sqrt{p}}\\
  & \geq & \frac{1}{ n^{7/2} p} -\frac{e^{-n(\Gamma-\epsilon)}}{n^{3}p}-\frac{e^{-n(\Gamma-\epsilon)}}{n^{\frac{3}{2}}\sqrt{p}}\\
  & >&e^{-n(\Gamma-\epsilon)},
  \end{IEEEeqnarray}
  where the last inequality follows by~\eqref{eq:nsuff0} and~\eqref{eq:nlogsuff} (applied for $\ell=2$).

  Repeating these arguments iteratively, we  conclude that it is possible to find indices $1\leq i_0^{(n)}< i_1^{(n)} < \ldots < i_{\log(n)}^{(n)}< n$ such that  for each $\ell\in\{1,\ldots, \log(n)\}$:
  \begin{IEEEeqnarray}{rCl}
  | v_{i_\ell^{(n)}}^{(n)} |&\geq& \frac{1}{n^{3(\ell+1)/2}p^{\ell/2}} - e^{-n(\Gamma-\epsilon)} \sum_{j=1}^{\ell} \left(n^{-3/2}p^{-1/2}\right)^{j} \nonumber\\
  &= & \frac{1}{n^{3(\ell+1)/2}p^{\ell/2}}  \nonumber \\
  & & - e^{-n(\Gamma-\epsilon)} n^{-3/2}p^{-1/2} \frac{1- n^{-3 \ell/2} p^{-\ell/2}}{1- n^{-3/2}p^{-1/2}}\label{eq:vitera}\\
   &>& \frac{1}{n^{2\log(n)}}\\
   & > & e^{-(\Gamma-\epsilon)}
  \end{IEEEeqnarray}
 where the last two inequalities follow from~\eqref{eq:nsuff0} and~\eqref{eq:nlogsuff}.
This proves that for sufficiently large $n$ the cardinality of the set $\mathcal{S}^{(n)}$ as defined in~\eqref{eq:setn} is at least $\log(n)$ and thus unbounded in $n$. \end{proof}

 \begin{Lemma}\label{Lemma3}
 For each positive integer $n$, let $(\pi_1^{(n)},\ldots, \pi_n^{(n)})$ be a tuple of  nonnegative real numbers that satisfy
 \begin{equation}\label{eq:powerseq}
\frac{1}{n} \sum_{i=1}^n \pi_i^{(n)} \leq p
 \end{equation}
 for some real number $p>0$,
 and let $\set{T}^{(n)}$ be a subset of the indices from $1$ to $n$,
 \begin{equation}
 \set{T}^{(n)} \subseteq \{1,\ldots, n\},
 \end{equation}
 that satisfies
\begin{equation}
|\set{T}^{(n)} | \to \infty \quad \textnormal{ as } \quad n \to \infty.
\end{equation}
Then, there exists a sequence of indices $\Big\{i^{(n)}\in\set{T}^{(n)}\Big\}_{n=1}^{\infty}$ such that
\begin{equation}
\varlimsup_{n\to \infty}\frac{1}{n} \pi_{i^{(n)}}^{(n)} =0.
\end{equation}
\end{Lemma}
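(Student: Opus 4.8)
The plan is to prove this by a one-line averaging (pigeonhole) argument: for each $n$ I would choose $i^{(n)}$ to be an index in $\set{T}^{(n)}$ that minimizes $\pi_i^{(n)}$ over $i\in\set{T}^{(n)}$, and then control this minimum by the average value over $\set{T}^{(n)}$, which is in turn governed by the power-type constraint~\eqref{eq:powerseq}. This yields not merely $\varlimsup = 0$ but the full limit, which is stronger than what is claimed.

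First I would observe that, because all the numbers $\pi_i^{(n)}$ are nonnegative, discarding the indices outside $\set{T}^{(n)}$ can only decrease the sum, so that for every $n$,
\begin{equation}
\sum_{i\in\set{T}^{(n)}}\pi_i^{(n)}\leq \sum_{i=1}^n\pi_i^{(n)}\leq np,
\end{equation}
where the last inequality is exactly~\eqref{eq:powerseq}. Dividing by the (eventually positive) cardinality $|\set{T}^{(n)}|$ bounds the average of $\pi_i^{(n)}$ over the subset $\set{T}^{(n)}$ by $np/|\set{T}^{(n)}|$. Since the minimum of a finite collection of nonnegative numbers never exceeds their average, there exists an index $i^{(n)}\in\set{T}^{(n)}$ with
\begin{equation}
\pi_{i^{(n)}}^{(n)}\leq \frac{np}{|\set{T}^{(n)}|}.
\end{equation}

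Finally I would divide by $n$ to get $\frac{1}{n}\pi_{i^{(n)}}^{(n)}\leq p/|\set{T}^{(n)}|$, and invoke the hypothesis $|\set{T}^{(n)}|\to\infty$ to conclude that the right-hand side tends to $0$ as $n\to\infty$. Hence $\lim_{n\to\infty}\frac{1}{n}\pi_{i^{(n)}}^{(n)}=0$, which in particular gives the asserted $\varlimsup_{n\to\infty}\frac{1}{n}\pi_{i^{(n)}}^{(n)}=0$. There is no substantive obstacle here—the statement is elementary once one spots that the subset sum inherits the global bound; the only minor points worth stating explicitly are that $\set{T}^{(n)}$ is nonempty for all large $n$ (so that the minimizer $i^{(n)}$ exists) and that the bound $p/|\set{T}^{(n)}|$ is independent of which minimizing index is selected.
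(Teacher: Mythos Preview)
Your proof is correct and uses essentially the same idea as the paper: select $i^{(n)}$ as a minimizer of $\pi_i^{(n)}$ over $\set{T}^{(n)}$ and bound this minimum via the average over $\set{T}^{(n)}$, which is controlled by the global constraint~\eqref{eq:powerseq}. The paper wraps the same inequality $\frac{1}{n}\sum_i \pi_i^{(n)} \geq \frac{1}{n}\pi_{\min}^{(n)}|\set{T}^{(n)}|$ in a contradiction argument, whereas your direct version is shorter and in fact yields the full limit rather than only the $\varlimsup$.
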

\begin{proof}
Since all numbers $\pi_{i}^{(n)}$ are nonnegative, for every sequence of indices $\{i^{(n)}\in\set{T}^{(n)}\}_{n=1}^{\infty}$,
\begin{equation}
\varlimsup_{n\to\infty} \frac{1}{n} \pi_{i^{(n)}}^{(n)} \geq 0.
\end{equation}
We thus have to prove that there exists at least one sequence of indices $\{i^{(n)}\in\set{T}^{(n)}\}_{n=1}^{\infty}$ that satisfies
\begin{equation}
\varlimsup_{n\to\infty} \frac{1}{n} \pi_{i^{(n)}}^{(n)} \leq 0.
\end{equation}
We prove this  by contradiction.
Assume  that for each sequence of indices $\{i^{(n)}\in \set{T}^{(n)}\}_{n=1}^{\infty}$
\begin{equation}\label{eq:ass}
\varlimsup_{n\to \infty}\frac{1}{n} \pi_{i^{(n)}}^{(n)} >0.
\end{equation}
Define for each $n\in\Integers^{+}$
\begin{equation}\label{eq:pmin}
\pi_{\min}^{(n)}:= \min_{i\in \set{T}^{(n)}} \pi_i^{(n)},
\end{equation}
and define the limit
\begin{equation}\label{eq:demin}
\delta_{\min}:=
\varlimsup_{n\to \infty}\frac{1}{n} \pi_{{\min}}^{(n)},
\end{equation}
which by Assumption~\eqref{eq:ass} is strictly positive,
\begin{equation}\label{eq:dpos}
\delta_{\min}>0.
\end{equation}

Now, since all the terms $\pi_{i}^{(n)}$ are nonnegative:
\begin{equation}\label{eq:sum}
\frac{1}{n}\sum_{i=1}^n \pi_{i}^{(n)} \geq\frac{1}{n} \sum_{i\in \set{T}^{(n)}} \pi_i^{(n)} \geq \frac{1}{n} \pi_{\min}^{(n)} |\set{T}^{(n)}|,
\end{equation}
where the second inequality follows by the definition in~\eqref{eq:pmin}.
By~\eqref{eq:demin} and \eqref{eq:dpos} and by the undboundedness of the cardinality of the sets  $\set{T}^{(n)}$, we conclude that  the sum in~\eqref{eq:sum} is unbounded in $n$, which contradicts Assumption~\eqref{eq:powerseq} and thus concludes our proof.
\end{proof}

\section{Existence of code $\set{C}_2,\ldots, \set{C}_L$ with the desired Properties}\label{app:ex}
The proof is by induction: for each $\ell\in\{2,\ldots, L\}$, when proving the existence of the desired $\set{C}_\ell$, we assume that
\begin{IEEEeqnarray}{rCl}\label{eq:gammalpr}
 \gamma_{l-1}\leq  \textnormal{exp}(-\underbrace{\textnormal{exp}\circ\ldots\circ\textnormal{exp}}_{l-2~\text{times}}(\Omega(n))).
\end{IEEEeqnarray}
For $l=2$, Inequality (\ref{eq:gammalpr}) follows from (\ref{eq:rho1}). 

By \cite{shannon}, for all rates
\[
\tilde{R} < \frac{1}{2}\log\frac{2+\sqrt{\tilde{P}^2/\sigma^4+4}}{4},
\]and for sufficiently large $n$
there exists a blocklength-$\tilde{n}$, rate-$\tilde R$  non-feedback coding scheme for the memoryless Gaussian point-to-point channel with noise variance $\sigma^2$, with expected average block-power no larger than $\tilde{P}$ and with probability of error  $P_e$ satisfying \begin{equation}\label{shan}
 P_e\leq e^{-\tilde n(E(\tilde{R},\tilde{P}/\sigma^2)-\epsilon')}
\end{equation}
for some fixed $\epsilon'>0$ and
\begin{equation}
 E(\tilde{R},\tilde{P})=\frac{\tilde{P}}{4\sigma^2}\left(1-\sqrt{1-e^{-2\tilde R}}\right).
\end{equation}
Since the probability of error of a non-feedback code over the Gaussian BC with common message is at most $K$ times the probability of error to the weakest receiver, we conclude that for all $\tilde P>0$ and
\begin{subequations}\label{eq:BCtildeR}
  \begin{IEEEeqnarray}{rCl}
  0<\tilde{R} < \frac{1}{2}\log\frac{2+\sqrt{\tilde{P}^2/\sigma_1^4+4}}{4},
 \end{IEEEeqnarray}
there exists a rate-$\tilde{R}$ code with power $\tilde{P}$ and blocklength $\tilde{n}$ that for the Gaussian BC with common message achieves probability of error
  \begin{IEEEeqnarray}{rCl}
  P^{\text{(BC)}}_{e}& \leq&       K e^{- \tilde{n}\left( \frac{P }{4  \sigma_1^2}\left(1-\sqrt{1-e^{-2 \tilde R}}\right)- \epsilon'\right)}.
 \end{IEEEeqnarray}
\end{subequations}
Now apply this statement to $\tilde{R}=R_{\text{phase},l}$, $\tilde{P}=P/\gamma_{l-1}$ and $\tilde{n}=\frac{\epsilon n}{L-1}-1$. Since for sufficiently large $n$, by (\ref{eq:gammalpr}),
\begin{IEEEeqnarray}{rCl}
 R_{\text{phase},l}< \frac{1}{2}\log\frac{2+\sqrt{\frac{P^2}{{\gamma^2_{l-1}}  \sigma_1^4}+4}}{4},
\end{IEEEeqnarray}
we conclude by (\ref{eq:BCtildeR}) that there exists a code $\set{C}_l$ of rate-$R_{\text{phase},l}$, block-power $P/\gamma_{l-1}$, blocklength $\frac{\epsilon n}{L-1}-1$ and probability of error $\rho_l$ satisfying
\begin{IEEEeqnarray}{rCl} \label{eq:aprhoell}
 \rho_l & \leq&       K e^{\!- \big(\!\frac{\epsilon n}{L\!-\!1}\!-\!1\!\big)\! \left(\! \frac{P }{4 {\gamma_{l\!-\!1}} \sigma_1^2}\Big(1\!-\!\sqrt{1\!-\!e^{\!-2 \!\frac{R(L-1)}{\epsilon\!-\!(L\!-\!1)/n}}}\!\Big)\!- \epsilon'\right)} \nonumber \\
 &\leq & \textnormal{exp}(-\underbrace{\textnormal{exp}\circ\ldots\circ\textnormal{exp}}_{l-1~\text{times}}(\Omega(n)))
\end{IEEEeqnarray}
where the inequality follows again by (\ref{eq:gammalpr}).

By the definition of $\gamma_l$ in (\ref{eq:gamrho}), Inequalities~(\ref{eq:aprhoell}) and \eqref{eq:gammalpr} also yield:
\begin{IEEEeqnarray}{rCl}\label{eq:gamrr}
 \gamma_{l}\leq  \textnormal{exp}(-\underbrace{\textnormal{exp}\circ\ldots\circ\textnormal{exp}}_{l-1~\text{times}}(\Omega(n))).
\end{IEEEeqnarray}

\end{document}